\documentclass{article}
\usepackage[english]{babel}\usepackage[english]{babel}

\usepackage[pagewise]{lineno}
\usepackage[normalem]{ulem}
\usepackage[T1]{fontenc}
\usepackage[utf8]{inputenc}
\usepackage{amsmath,amssymb,mathrsfs}
\usepackage{amsthm}
\usepackage{dsfont}
\usepackage{graphicx}
\usepackage[dvipsnames]{xcolor}
\usepackage{empheq}
\usepackage{faktor}
\usepackage{hyperref}
\usepackage{pgfplots}
\usepgfplotslibrary{fillbetween} 
\usepackage{array}
\usepackage{makecell}
\usepackage{xcolor}
\usepackage{soul}

\definecolor{pastelBlue}     {RGB}{174,198,207}
\definecolor{pastelGreen}    {RGB}{119,221,119}
\definecolor{pastelOrange}   {RGB}{255,179,71}
\definecolor{pastelRed}      {RGB}{255,105,97}
\definecolor{pastelPurple}   {RGB}{179,158,181}
\definecolor{pastelYellow}   {RGB}{253,253,150}
\definecolor{pastelPink}     {RGB}{255,209,220}
\definecolor{pastelTeal}     {RGB}{129,216,208}
\definecolor{pastelGray}     {RGB}{207,207,207}
\usepackage{csquotes}

\usepackage[style=numeric]{biblatex}
\addbibresource{biblio.bib}

\usepackage{subcaption}
\captionsetup[sub]{font=large}
\usepackage{adjustbox}

\usepackage{array}
\usepackage{booktabs}
\usepackage{makecell}

\usepackage{tikz}
\usetikzlibrary{shapes,shadows,arrows,graphs}
\usetikzlibrary{decorations.pathreplacing,positioning}

\usepackage{graphics}

\usepackage{vmargin}

\setmarginsrb{2.3cm}{2cm}{2.3cm}{2.1cm}{0cm}{0cm}{0cm}{1.2cm}

\usepackage{comment}

\newcommand{\lb}{\lambda}

\newcommand{\beq}{\begin{equation}}
\newcommand{\eeq}{\end{equation}}
\renewcommand{\epsilon}{\varepsilon}

\renewcommand{\leq}{\leqslant}
\renewcommand{\geq}{\geqslant}
\renewcommand{\d}{\mathrm{d}\, }
\newcommand{\R}{\mathcal{R}}

\newcommand{\RO}{\mathcal{R}_0}
\newcommand{\I}{I_{\text{tot}}}

\renewcommand{\d}{\, \mathrm{d}}

\newtheorem{theorem}{Theorem}
\newtheorem*{theorem*}{Theorem}

\newtheorem{lemma}[theorem]{Lemma}
\newtheorem{coro}[theorem]{Corollary}

\newcommand{\hm}[1]{\textcolor{purple}{#1}}
\newcommand{\gdm}[1]{\textcolor{cyan}{#1}}

\title{Incentives for self-isolation based on incidence rather than
prevalence could help to flatten the curve: a modelling study}

\author{Giulia \textsc{de Meijere} \thanks{Tampere Complexity Lab, Faculty of Information Technology and Communication Sciences, Tampere University, 33720, Tampere, Finland. Email: giulia.demeijere@tuni.fi}, Hugo \textsc{Martin} \thanks{IRMAR, Université de Rennes, CNRS, IRMAR - UMR 6625, 35000 Rennes \textit{and} INRAE, Agrocampus Ouest, Université de Rennes, IGEPP, Le Rheu, France. Now at Insitut Denis Poisson (IDP), Université de Tours, 37000 Tours, France. Email: hugo.martin@univ-tours.fr}}

\date{}

\begin{document}

\maketitle 

\begin{abstract}
In recent years, numerous advances have been made in understanding how epidemic dynamics is affected by changes in individual behaviours. We propose an SIS-based compartmental model to tackle the simultaneous and coupled evolution of an outbreak and of the adoption by individuals of the isolation measure. The compliance with self-isolation is described with the help of the imitation dynamics framework. Individuals are incentivised to isolate based on the prevalence and the incidence rate of the outbreak, and are tempted to defy isolation recommendations depending on the duration of isolation and on the cost of putting social interactions on hold. We are able to derive analytical results on the equilibria of the model under the homogeneous mean-field approximation. Simulating the compartmental model on empirical networks, we also do a preliminary check of
the impact of a network structure on our analytical predictions. We find that the dynamics collapses to surprisingly simple regimes where either the imitation dynamics no longer plays a role or the equilibrium prevalence depends on only two parameters of the model, namely the cost and the relative time spent in isolation. Whether individuals prioritise disease prevalence or incidence as an indicator of the state of the outbreak appears to play no role on the equilibria of the dynamics. However, it turns out that favouring incidence may help to flatten the curve in the transient phase of the dynamics. We also find a fair agreement between our analytical predictions and simulations run on an empirical multiplex network.
\end{abstract}

\noindent\textbf{Keywords:} Compartmental model; Imitation dynamics; Self-isolation; Source of information; Asymptotic behaviour; Transient phase; Minimization of prevalence

\paragraph{Abbreviations}
\begin{itemize}
\item NPI: Non-Pharmaceutical Intervention
\item SIS: Susceptible-Infected-Susceptible
\item IDF: Isolation-Delay-Fatigue
\item CNS: Copenhagen Network Study
\item ODE: Ordinary Differential Equation
\item DFFC: Disease-Free with Full Compliance
\item DFNC: Disease-Free with No Compliance
\item ENC: Endemic with No Compliance
\item EPC: Endemic with Partial Compliance
\item EFC: Endemic with Full Compliance
\item PDE: Partial Differential Equation
\end{itemize}

\section{Introduction}
\par

The survival and well-being of populations can be seriously threatened by the circulation of communicable diseases. At the same time, the very individuals concerned with the spread of a pathogen are often actively involved in the effort to mitigate the spread and to control the pathogen's health-care impact.
Indeed, mitigation measures usually rely on the encouraged but voluntary adoption by individuals of attitudes that are expected to lower the probability of new infections, and thus benefit all. Voluntary adoption as a behavioural decision is however believed to be more than just a spontaneous immutable and homogeneous  state of a population.  Multiple surveys have identified individual factors (e.g. psychological, socio-economic) and message delivery factors (communication about the importance of a measure, and about the severity of the circulating pathogen) that influence whether a person can or will adopt a measure~\cite{Setbon_2010,Raude_2020,Peretti-Watel_2020,Bartolo19}. In addition to these factors, people also form such decisions based on the influence of their social surroundings~\cite{Montgomery_2020,Aghaeeyan_2024, Betsch10}.
\par
Despite a rapidly growing literature on the coupled evolution of disease and adoption of mitigation measures~\cite{Funk_2010_review,Funk_2015,Wang_2015,Verelst_2016,Chang_2020,Bedson_2021,Hamilton_2024,Lejeune_2025,Reitenbach_2025, Wang16, Weston18, Onofriobook, book_Tanimoto}, the complex phenomenology that arises from it remains only partially understood. Various approaches have been proposed to include human behaviour into epidemiological models. Among them, the framework of imitation dynamics relies on the hypothesis that individuals carefully weigh their different options before actively taking side, based on the observed payoffs of others~\cite{Hofbauer_Sigmund}. This framework has been used to model choices related to vaccination~\cite{Bauch_2005,Oraby_2014,Chang_2019,Yin_2022,Khan_2023,Khan_2024}, social distancing~\cite{Poletti_2009,Zhao_2020,Martcheva_2021,Cascante22}, mask-wearing~\cite{Kabir_2021,Traulsen_2023}
and the use of insecticide-treated nets~\cite{Asfaw_2018}. For certain diseases, seasonality plays a pivotal role, and is included into mathematical models. However, even without such periodic modulations, the coupling between the dynamics of the disease and human behaviour is able to generate oscillations as long as the population only partially complies with a mitigation measure in place. If individuals update their opinions sufficiently fast relative to the spreading rate of the disease, the oscillations are either necessarily damped~\cite{Martin25} or possibly sustained~\cite{Bauch_2005}. However, the scalar measure representing the state of the outbreak that should be communicated or the rate of exposure to information that would be beneficial for mitigation is not yet thoroughly understood. In the present work, we propose to address these questions, with the aim of better orienting policy makers and communication during present and future outbreaks.
\par
In the initial stages of a novel outbreak, efficient vaccines and drugs are rarely available and populations are forced to resort to non-pharmaceutical interventions (NPI) in order to slow down the spread of the disease. Depending on the disease~\cite{Peak_2017}, the most efficient NPI to mitigate the spread might vary. Self-isolation of known infected individuals is however a standard NPI that relies on the complete interruption of social interactions over a limited period of time, and was heavily relied upon during the COVID-19 pandemic. Although the measure would ideally only isolate individuals while and if they are infectious, logistical reasons often force healthy individuals or recovered individuals to spend time in complete isolation from the community. Some modelling works have accounted for this type of inefficiency~\cite{Tori_2022, Eksin2017}, while others allowed individuals to enter and break self-isolation at any time~\cite{Zhang_2023}.
\par
Here, we focus on the impact of an imitation dynamics on the efficacy of the self-isolation measure. To do so, we build on a compartmental model introduced in~\cite{deMeijere2021} that accounts explicitly for the following inefficiencies in its implementation: partial adoption, pre- and post-isolation infections.
Partial adoption described in the model allows individuals to decide not to isolate at all, as opposed to several other models for isolation or quarantine where quarantine is not a choice~\cite{Hethcote2002, Chen2019}.
For individuals who comply, pre-isolation infection due to a delayed start of quarantine can be a consequence of unawareness of the infected status (asymptomatic or pre-symptomatic infection, unspecific symptoms, absence of accurate diagnostic tests) or of logistical difficulties with interrupting current levels of activity (e.g., work-related). Post-isolation infection due to overhasty termination of the isolation period can instead be a consequence of `isolation fatigue' \cite{Petherick21} or of isolation recommendations requiring individuals to isolate for a duration that does not always cover the entire infectious period of patients.
\par
In the present work we make two substantial additions to this model. First, we introduce the possibility for individuals exiting quarantine before full recovery to re-enter the community with a reduced infectiousness. This lowered contribution to the force of infection can be attributed either to the adoption of precautionary measures (e.g., social distancing, mask-wearing) or to the pathogen load diminishing as the host immune response progresses~\cite{He_2020,Marc_2021}.
Second and more importantly, while in \cite{deMeijere2021} the probability of deciding whether to isolate or not is constant, we here let it evolve over time. The decision of individuals is made after carefully pondering the cost of isolating and the responsibility towards the community, based on some information on the state of the outbreak at a given time. In the framework of imitation dynamics, the level of prevalence is the most common information that is used by agents to make their decision. 
However, both prevalence and incidence are considered as the main indicators in epidemiology for the description of the state of an outbreak~\cite{White_2017}.
In the present work, we include both indicators to compare their effect on the efficacy of the isolation measure and the dynamics of the disease, when it is the population that self-regulates the levels of adoption (see~\cite{Bliman22} for another model with a composite source of information).
We expect the dynamic and reactive nature of the self-isolation measure to affect the phenomenology of the coupled behaviour and disease dynamics in a different way compared to measures such as vaccination and mask wearing~\cite{Bauch_2005,Martin25}. While self-isolation has previously been modelled in combination with imitation dynamics (albeit in a different context,~\cite{Amaral_2021}), to our knowledge, it has not yet been studied in the presence of such dual source of information.
\par
We analyse the equilibria of our model and perform a local stability analysis. We find five equilibria and no bi-stable state. Surprisingly, the type of information used (i.e., prevalence or incidence) does not affect the equilibrium values or their stability regions. However, prioritising incidence over prevalence when making decisions may reduce the peak height of the epidemic curve, while contributing marginally to the total number of cases. 
This bias remains neutral under low basic reproduction numbers or limited volatility of opinions. In contrast, if the disease propagates fast and individuals update their strategy at a quick pace, emphasising on incidence-based data can help containing the overall impact of the disease.
\par
In Section \ref{section:model} we introduce our model in its homogeneous mean-field formulation and describe its implementation on a two-layers network. Section \ref{section:results} is devoted to the investigation of the dependence of stationary points of the dynamics and their local stability on the model parameters. There, we also explore the impact of a sparser and more heterogeneous network structure on our analytical conclusions, through simulations on an empirical network. Finally, in Section \ref{section:discussion}, we discuss our results and comment on perspectives for future work.

\section{Modelling framework\label{section:model}}

\subsection{ODE model}

\subsubsection{Disease dynamics}
\begin{figure}
    \centering
    \includegraphics[width=11cm]{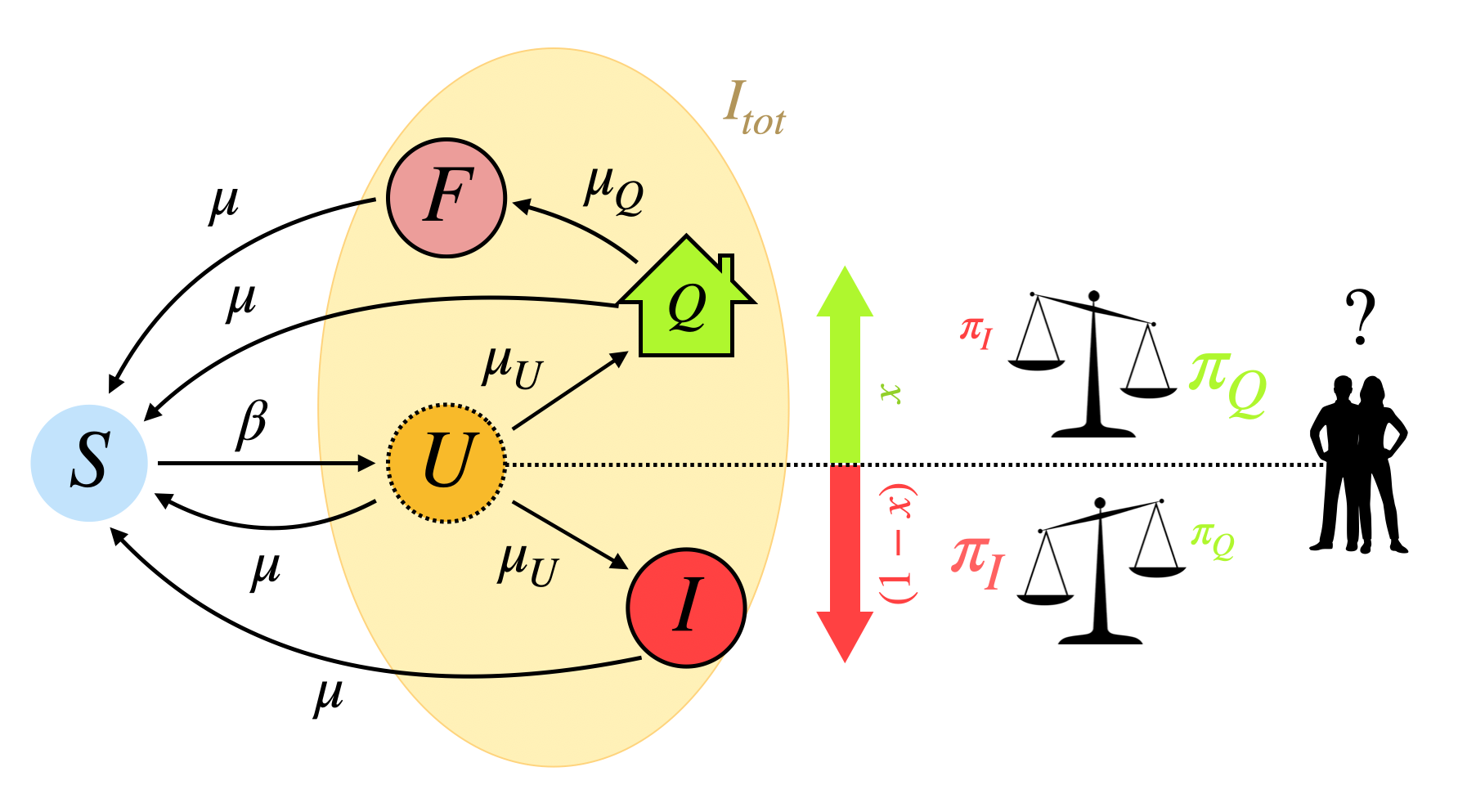}
    \caption{Schematic description of the compartmental model for an SIS-like epidemic evolution with an imitation dynamics affecting the probability that individuals cooperate with the isolation measure.\label{fig:schema}}
\end{figure}

We build a compartmental model based on the standard SIS (Susceptible-Infected-Susceptible) epidemic model \cite{Anderson91}. Let $S$ denote the compartment of susceptible individuals who may contract an infectious disease.
The standard infected compartment of the SIS model is here split into four distinct compartments $U$, $I$, $Q$ and $F$ in order to distinguish contagious from `harmless' (because isolated) infected individuals.
\par
With infection rate $\beta$ and conditional to the contact between a susceptible and a non-isolated infected individual, newly infected individuals enter the compartment $U$ of infected and undecided agents. We assume the standard homogeneous mean-field formulation of the probability of a contact between a susceptible and an infected individual. Individuals in the undecided compartment are infectious and have yet to decide whether to isolate or not, e.g., being unaware of their infected status or for logistical reasons. 
The spontaneous transition out of the undecided compartment occurs with decision rate $\mu_U$. Individuals then either enter the isolated compartment $Q$ with compliance probability $x$, or the still infectious compartment $I$, with the complementary probability $1-x$.
Individuals in the $Q$ compartment are assumed to be perfectly isolated from the community. Although we describe the self-isolation of known infected individuals and not preventive quarantine, we call the isolation compartment $Q$, in order to refer to the already existing literature on SIQS models \cite{Hethcote2002, Young2019, Chen2019, Zhang2017, Esquivel2018, Mancastroppa2020}.
Quarantined individuals can exit the $Q$ compartment to enter yet another infectious compartment $F$ (fatigued) - a transition that occurs spontaneously with rate $\mu_Q$. While individuals in compartment $I$ who refuse to comply with the isolation measure are as contagious as individuals in compartment $U$, fatigued individuals in compartment $F$ are conferred a decreased infectiousness $\beta (1-\epsilon)$, to account for a potential increased awareness after the isolation period (resulting in protective behaviours such as mask wearing and social distancing) or a lower pathogen load due to the natural course of the infection.
Although the rest that individuals get during isolation might help them to recover faster, we neglect it in the present framework and consider that the time to recovery is on average identical regardless of the decision made by individuals, and more in general regardless of the path they undertake in the flow diagramme.
To ensure this, direct transitions from compartments $U$, $Q$, $I$ or $F$ back to the susceptible compartment $S$ occur with the same rate $\mu$ (see Appendix \ref{sec:rate_mu}). Note that undecided individuals may recover before ever making a decision. This happens when the recovery rate is much higher than the decision rate and corresponds to one limit where our model reduces to the standard SIS model.

The fractions of individuals in each compartment overall undergoes the following dynamics:

\[
\begin{cases}
    \dot S &= -\beta S(U + I + (1 - \epsilon) F) + \mu(U + Q + I + F)\\
    \dot U &= \beta S(U + I + (1 - \epsilon) F) - (\mu + \mu_U) U\\
    \dot Q &= \mu_U x U - (\mu + \mu_Q) Q\\
    \dot I &= \mu_U (1 - x) U - \mu I\\
    \dot F &= \mu_Q Q - \mu F,
\end{cases}
\]
with the assumption that demographic changes are negligible, which translates into $S + U + I + Q + F = 1$.
For convenience, let $\I$ denote the total prevalence, i.e. $\I = U + I + Q + F$.

\subsubsection{Coupled disease and opinion dynamics}

\par
To make the here-above model for isolation more realistic, we let the probability $x$ that undecided individuals comply with the isolation measure to be another variable of the model and to evolve in time. We assume that an individual can either be a \textit{cooperator} or a \textit{defector},
depending on whether the individual undergoes self-isolation or not when aware of being infected. The real number $x$ can now be understood as the fraction of cooperators in the population. 
In addition, people are assumed to dynamically change their opinion on self-isolation based on the comparison between the payoff of their own strategy and the payoff of the alternative strategy, in the fashion of the imitation dynamics framework~\cite{Bauch_2005, Ndeffo_Mbah_2012, Wang_2015}. Overall, the variable $x$ undergoes the following dynamics:
\[
\dot x = \sigma x (1 -x) (\pi_Q - \pi_I),
\]
with $\sigma$ the rate at which individuals challenge their opinion, and $\pi_Q$ and $\pi_I$ the payoffs associated to the adoption (cooperation) or not (defection) of the isolation measure, respectively. 
\par
In the present imitation dynamics framework, we are able to account for the deterioration of compliance over longer isolation periods by making the payoff of cooperation decay with the duration of quarantine. More precisely, we model the cost to comply as the product of the daily cost to self-isolate multiplied by the average duration of isolation
\[
\pi_Q = - k \cdot\frac{1}{\mu + \mu_Q}.
\]
The parameter $k$ can be thought of as a straightforward monetary loss from being unable to commute to work, a burden from reorganizing housework, or a social cost from not seeing relatives and friends. A detailed derivation of the average time spent in compartment $Q$ is provided in Appendix~\ref{sec:rate_mu}.
\par
In contrast, the payoff associated with defecting is assumed to decay with the prevalence $\I$ and the incidence $\beta (1 - \I)(\I - Q - \epsilon F)$, with respective weights $k_p$ and $k_i$
\[
\pi_I = - k_p \I - k_i \beta (1 - \I)(\I - Q - \epsilon F).
\]
Combining both payoffs, the time evolution of the proportion of cooperators in the population reads
\[\dot x = \sigma x (1-x) \left(k_p\I + k_i \beta (1 - \I)(\I - Q - \epsilon F) - \frac{k}{\mu + \mu_Q} \right).\]
Note that we assume that the fraction of cooperators is homogeneously distributed across all compartments. In particular, this implies that someone's current opinion is not affected by learning of being infected.

\subsubsection{Final system of equations after nondimensionalization}

Before conducting the mathematical analysis, we nondimensionalise our system relying on the fact that the time of recovery provides a natural timescale $\mu t$ for the system. 
Let $\RO := \frac{\beta}{\mu}$ be the basic reproduction number, also defined as the average number of secondary cases for each primary infected case in an otherwise susceptible population.
After rescaling time by $\mu$, the final system reads
\begin{equation}\label{eq:adim}
\begin{cases}
    \dot{I}_{\text{tot}} &= \RO (1-\I)(\I - Q - \epsilon F) - \I\\
    \dot U &= \RO (1-\I)(\I - Q - \epsilon F) - (1+u) U\\
    \dot Q &= u x U - (1+q) Q\\
    \dot F &= q Q - F\\
    \dot x &= \kappa x (1-x) \left(p\I + (1-p) \RO (1 - \I)(\I - Q - \epsilon F) - \frac{c}{1 + q} \right),
\end{cases}
\end{equation}

where we have introduced
\begin{itemize}
    \item[--] the relative decision-making rate (exit from compartment $U$), denoted $u:= \frac{\mu_U}{\mu}$;
    \item[--] the relative isolation-breaking rate (exit from compartment $Q$), denoted $q:= \frac{\mu_Q}{\mu}$. Note that since the average time spent in isolation is $\frac{1}{\mu+\mu_Q}$, we can say that the average time for recovery is $q + 1$ longer than the average time spent in isolation. Therefore, the mathematical condition $q > 0$ implies an infectious period always longer, on average, than the isolation period;
    \item[--] the normalized opinion update rate $\kappa := \frac{\sigma (k_p + k_i)}{\mu}$ ;
    \item[--] the relative importance given to prevalence when deciding whether to isolate or not, denoted $p := \frac{k_p}{k_p + k_i}$;
    \item[--] the normalized daily cost of self-isolation $c := \frac{k}{\mu(k_p + k_i)}$.
\end{itemize}

\begin{table}
\caption{Summary of the final nondimensionalized model parameters. 
\label{tab:rescaled_param}}
\begin{center}
\begin{adjustbox}{width=\linewidth}
\begin{tabular}{@{}llcc@{}}\toprule
\textbf{Symbol} & \textbf{Name} & \textbf{Plausible range} & \textbf{Reference}
\\\midrule
$\RO \in (0,\infty)$ & Basic reproduction number & $(0.3,6)$ &\cite{vandenDriessche2017, Alimohamadi2020} \\
$\epsilon \in [0,1]$ & Level of protection of infectious fatigued individuals & $(0.05, 0.95)$ &\cite{Leith_2021,Koh_2022}\\
$u \in (0, \infty)$ & Ratio of the decision-making rate over the recovery rate & $(0, 5)$ & \\
$q \in (0, \infty)$ & Ratio of the isolation-breaking rate over the recovery rate & $(0, 2)$ & \\
$\kappa \in (0, \infty)$ & Rescaled rate of opinion update & $(0, 50)$ & \\
$p \in [0, 1]$ & Relative importance given to prevalence in the `Defect' strategy & $[0, 1]$ & \\
$c \in (0, \infty)$ & Rescaled cost to self-isolation & $(0, 3)$ & \\
\bottomrule
\end{tabular}
\end{adjustbox}
\end{center}
\end{table}

When $x(t=0)=0$ or $u \rightarrow 0$, we recover the SIS system of equations. Instead, the case where $\kappa=0$ is identical to the IDF (Isolation-Delay-Fatigue) model introduced in \cite{deMeijere2021}, where the compliance probability $x$ is constant in time. In this paper, we only consider non-negative values for the parameter $\kappa$.
\par
The parameter $p$ of the model accounts for the relative importance of prevalence and incidence in the decision making of individuals.
Fully tailoring compliance with isolation on the epidemic prevalence targets the containment of prevalence-related quantities, such as hospital saturation levels and mortality. On the other hand, fully relying on incidence focuses on the probability of new infections, hence a refusal to isolate if there are either already too many infected individuals (the chances of meeting a susceptible individual whom they might infect are low) or still too few infected individuals (they would be among the few people isolating).
\par
A schematic description of the compartmental model is provided in Fig.~\ref{fig:schema} and the parameters used in the nondimensionalized model are summarized in Table~\ref{tab:rescaled_param}. 

\subsection{Simulations}
\label{section:network_setting}

The mean-field equations for the imitation dynamics can arise from different implementations of the underlying stochastic dynamics. 
We simulate the dynamics using a Gillespie optimised algorithm \cite{Cota2017} and with the help of the Quasistationary State method, proven to be an efficient way of considering only surviving runs in simulations of absorbing phase transitions \cite{QSmet}.
\par
We distinguish two stochastic implementations that we call `one way' and `two ways'.
The `two ways' implementation allows a change in strategy, even if the other strategy has a lower payoff. It sees the change in strategy just like an infection event, whereby an individual `infects' a neighbour and makes it adopt its strategy. 
If the payoff of the explored strategy though is much lower than the one of the original strategy the `infected' individual quickly tends to transition back to its original strategy.
The `one way' implementation instead only allows the transition if the difference of payoffs between the two strategies is advantageous.
In both implementations, the strategy changes are only performed if the strategy of the individuals who change is still carried by some ($>1$) individuals in the population. 
\par 
We compare the performances of the two designed implementations in terms of their error, i.e. deviation from the ODE model, and in terms of their execution time. More precisely, for each implementation, we simulate the time series of $U$, $I$, $Q$, $F$ and $x$ on a complete network of $N = 500$ nodes. Averaging over a given number of runs, we obtain a time series for each compartment, and compute the `combined mean error' as the sum, over all time steps and compartments, of the absolute value of the difference between these time series and ODE predictions. In Figure~\ref{fig:figure5}a, we compare the distribution of $50$ combined mean errors, for each implementation and for different numbers of runs. For both implementations, the error of simulations run on a complete network decays with the number of runs. However, we find that the error of the `two ways' implementation decays more slowly.
This is a consequence of the fact that the `two ways' implementation allows quickly reversed and thus `inefficient' transitions, whereby individuals change their strategy even though the new strategy incurs a lower payoff.
In Figure~\ref{fig:figure5}b, we show the comparison between the distributions of execution times for each implementation. As a consequence of its inefficient transitions, the `two ways' implementation appears to be also consistently slower in execution time compared to the `one way' implementation.
Moreover, these conclusions appear to persist on a network structure where the stochastic implementation no longer perfectly agrees with the ODE predictions.

\begin{figure}
    \centering
    \includegraphics[width=0.6\textwidth]{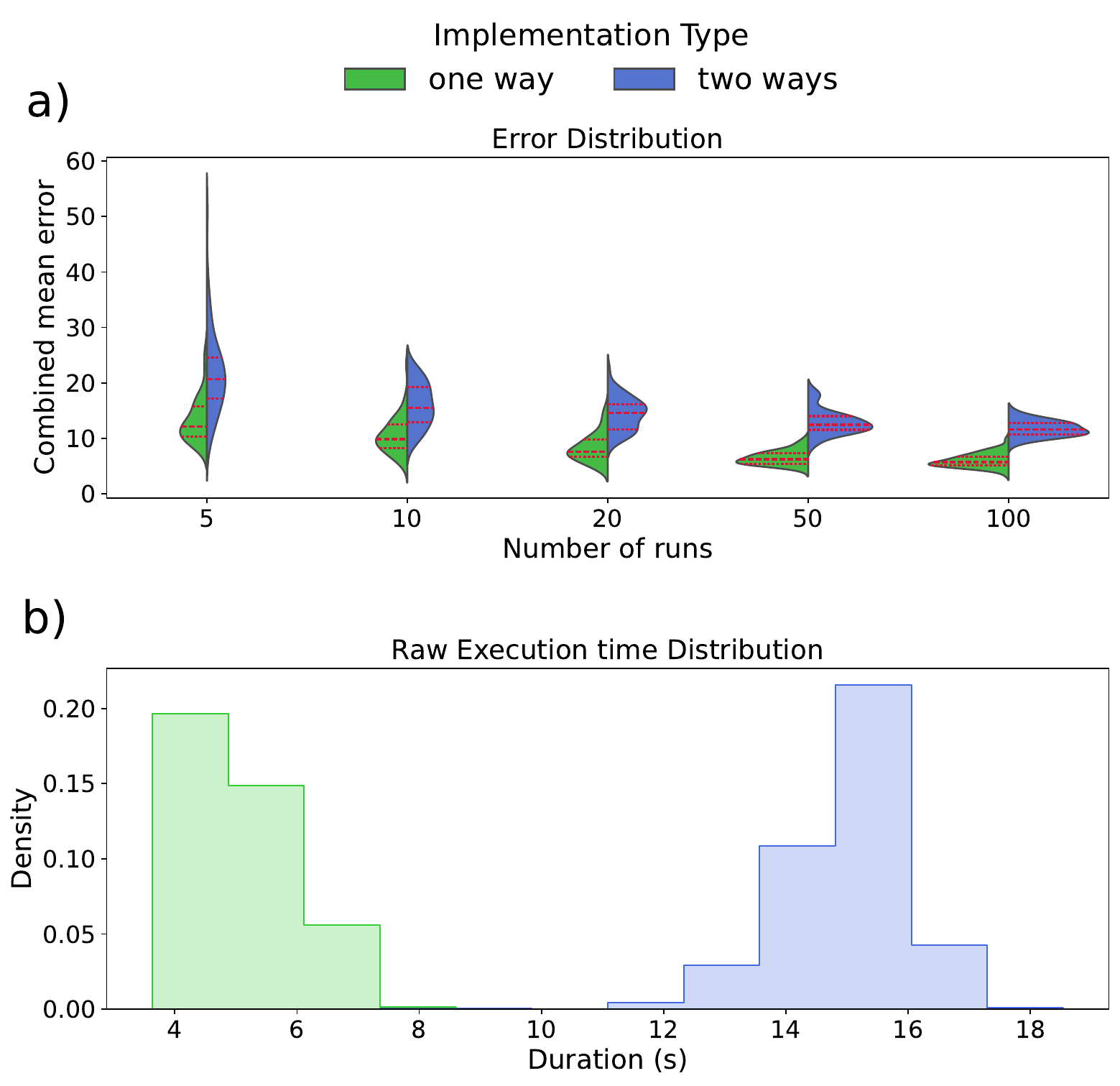}
    \caption{\textbf{The `one way' implementation is more efficient than the `two ways' implementation}. a) Violin plots of the distribution of $50$ combined mean errors for the `one way' (green) and the `two ways' (blue) implementations, and for different numbers of runs. The red lines are the median (dashed) and the lower and upper quartiles (dotted). b) Distribution of the execution time for both the implementations. We consider a complete network of size $N = 500$. \label{fig:figure5}
    } 
\end{figure}

Since the two approaches are both consistent but the `one way' implementation is more efficient, results are here-after only shown for the `one way' implementation.

Our model describes two spreading phenomena occurring on two distinct network layers. We thus consider in principle two different networks, one for the behavioural influence of individuals and one for the physical proximity (see for example~\cite{Qiu2022} for a similar setting). 
For the  empirical network structure, we consider data that was collected as part of the Copenhagen network study (CNS) \cite{Sapiezynski19, Netzschleuder_repo}, a study involving more than 700 students from a university in Copenhagen. Multi-layer temporal data was harvested over a period of four weeks through the mobile phones of the participants. Among other signals, the physical proximity of the students was estimated via Bluetooth signal strengths, and information about their Facebook friendships was also recorded. 
For the propagation of the synthetic disease we use the $\texttt{copenhagen/bt}$ Bluetooth dataset. Instead, for the layer where the adoption of measures propagates, we consider the $\texttt{copenhagen/facebook}$ Facebook network.
We assume that if the number of nodes differs from one layer to the other, it is a consequence of missing data and not a consequence of the inactivity of some nodes in either of the layers. The multi-layer network is aggregated over the entire observation period, and interactions between individuals are assumed to be unweighted and undirected.

\section{Results}
\label{section:results}
A stability analysis of the nondimensionalized ODE model (System~\eqref{eq:adim}) reveals that the system has two disease-free equilibria as well as three endemic equilibria, distinguishable by the fraction of cooperators in the population. The equilibria of the dynamics and their conditions of stability shed light on the role played by pre-isolation and post-isolation infections, partial compliance, increased awareness after isolation, source of information and volatility of opinion. Moreover, simulating the stochastic dynamics of the present compartmental model on a network structure suggests a fair agreement with analytical results although oscillations found in the ODE model may be damped faster on the network.

\paragraph{ODE model: equilibria and stability.}

The technical details of the assessment of the local asymptotic stability are postponed to Appendix~\ref{apdx:stab}. In a nutshell, the procedure involves (i) identifying the equilibria of the model, (ii) computing the Jacobian matrix associated with System~\eqref{eq:adim}, (iii) plugging the equilibria into the Jacobian matrix and (iv) finding the eigenvalues. 
\\
Two of the equilibria correspond to disease-free situations, i.e. characterised by $\I^* = 0$ and thus also $U^* = Q^* = F^* = 0$. Consequently, the equilibria differ only by the level of compliance, which can only take two values. On the one hand, if $x^* = 1$, corresponding to a scenario with full compliance, we call the equilibrium Disease-Free with Full Compliance (DFFC). This equilibrium is unconditionally unstable. On the other hand, if $x^* = 0$, the equilibrium is called Disease-Free with No Compliance (DFNC), and is locally asymptotically stable if and only if $\RO \leq 1$. 
Indeed, in the absence of a threat, the incentive to self-isolate vanishes, leading to the complete absence of cooperators in the population.
\par
Since the other equilibria correspond to endemic situations, we assume $\RO > 1$ from this point onward.
We are able to derive the equilibrium values of the fractions of individuals in compartments $U$, $Q$, $I$ and $F$ along with the share $x$ of cooperators in the population.
In the stationary limit, the fractions of individuals in compartments $U$, $Q$ and $F$ can be expressed as functions of $x^*$ and $\I^*$:
\begin{equation}\label{eq:UQF}
\begin{cases}
    U^* &= \frac{1}{1 + u}\I^*\\
    Q^* &= \left(1 - \frac{1}{1 + u}\right)\frac{1}{1 + q}x^* \I^* = \frac{u}{1+u}\frac{1}{1 + q}x^* \I^*\\
    F^* &= \left(1 - \frac{1}{1 + u}\right)\left(1 - \frac{1}{1 + q}\right)x^*\I^* = \frac{u}{1+u}\frac{q}{1 + q}x^* \I^*,
\end{cases}
\end{equation}
where the stationary values $x^*$ and $\I^*$ solve the following two equations
\begin{equation}\label{eq:Ix}
\begin{cases}
\frac{1}{\RO} & = (1 - \I^*)\left(1 - \frac{u}{1 + u}\frac{1 + \epsilon q}{1 + q}x^*\right)\\
0 & =  x^*(1 - x^*)\left(\I^* - \frac{c}{1 + q}\right),
\end{cases}
\end{equation}
under the assumption that $\kappa > 0$. We are left with the computation of the equilibrium values of $x^*$ and $\I^*$. 

Provided $x^*$ lies in $[0,1]$, the first line of System~\eqref{eq:Ix} gives
\[
    \I^*(x^*) := 1 - \frac{1}{\mathcal R_{x^*}}.
\]
where $\mathcal R_{x^*}$ reads
\begin{equation}\label{eq:Rx}
    \mathcal R_{x^*}:= \left(1-\frac{u}{1+u}\frac{1 + \epsilon q}{1 + q}x^*\right)\RO.
\end{equation}
In the case of no compliance, i.e. $x^* = 0$ (`No Compliance', ENC), this expression reduces to the standard SIS result
\[
\I^* = 1 - \frac{1}{\RO}.
\]
If full compliance is reached, i.e. $x^* = 1$ (`Full Compliance', EFC), the equilibrium prevalence is
\[
\I^* = 1 - \frac{1}{\mathcal{R}_1}.
\]
The parameter $\mathcal{R}_1$ can be understood as the basic reproduction number of a SIQS model with an extra compartment for fatigued individuals (pre- and post-isolation infections are modelled but partial adoption is not).
\par
Finally, in case $0 < x^* <1$ (`Partial Compliance', EPC), the second equation of System~\eqref{eq:Ix} yields a simple dependence of the equilibrium prevalence on the model parameters
\begin{equation}\label{eq:ItotEPC}
\I^* = \frac{c}{1+q},
\end{equation}
and the first equation provides the following equilibrium share of compliers
\begin{equation}\label{eq:xstar}
   x^* := \frac{1 + u}{u}\frac{1 + q}{1 + \epsilon q}\left(1 - \frac{1}{\RO\left(1 - \frac{c}{1+q}\right)}\right).
\end{equation}
Equating this expression to $0$ and to $1$ provides the stability conditions for the ENC and EFC equilibria, as well as the equations for the boundaries that separate the stability regions.
It is worth to note that none of the equations depends on the initial conditions (except the conditions $\kappa > 0$ and $x(0) \in (0,1)$).

All the results regarding the large time asymptotics of the model are summarised in Table~\ref{table:eq}.

\paragraph{The equilibrium prevalence reduces to a surprisingly simple dependence on the model parameters.}

We refer the reader to Figure~\ref{fig:figure2} for heatmaps of the equilibrium prevalence across a variety of parameter combinations. Both the parameters $p$ of the source of information (prevalence vs incidence) and $\kappa$ of the volatility of opinions are omitted as they do not affect the equilibria. Indeed, at equilibrium, the incidence $\RO (1-\I)(\I - Q - \epsilon F)$ equals the prevalence $\I$, and the two contributions of the source of information cancel each other out. 
\par
Note that the intertwined evolution of disease and compliance with the self-isolation measure results in a surprisingly simple set of equilibria. In two regimes -~`Full Compliance' and `No Compliance'~- the dynamics collapses to one where the imitation does not play any role. Instead, in the third and last regime -~`Partial Compliance'~- the equilibrium prevalence reduces to a very simple function of only two model parameters: the (rescaled) cost of isolation $c$ and the relative isolation-breaking rate $q$.
\par
Surprisingly, in this regime, the prevalence does not depend on the delay to enter isolation (through $u$) or on the level of caution of individuals at the exit from isolation (through $\epsilon$). Delay and level of caution rather affect the equilibrium level of compliance $x^*$ and the stability conditions of this regime. In `Partial Compliance', being more careful 
to start promptly the isolation period (a larger $u$) or to limit risky behaviours after the isolation period (a larger $\epsilon$) thus does not restrain the outbreak size, but makes the containment more efficient, allowing the same equilibrium prevalence $\I^*$ with a lower overall compliance of the population $x^*$. 
\par
In the absence of changes in the adoption of the isolation measure, these two parameters are expected instead always to affect the equilibrium prevalence, unless there is no compliance at all.
In Appendix~\ref{section_imitation} we compare analytical predictions in the presence and in the absence of changes in compliance. 
We show that under the trivial condition that the static level of adherence of the population is given by the equilibrium level of adherence, the equilibrium prevalences are identical. Moreover, depending on the comparison between the initial level of adherence $x(t=0)$ and the equilibrium level of adherence $x^*$, we can predict whether a dynamics with higher rates of imitation will generate consistently higher or lower prevalence in the transient phase.

\paragraph{Optimal isolation duration.}
We find that an optimal isolation period emerges. We typically observe that the equilibrium prevalence is minimized by an intermediate isolation duration, balancing the trade-off between the increased efficacy of extended isolation periods against the importance of widespread adoption for breaking transmission chains.
\par
For a sufficiently large $c$ and sufficiently low $q$ (long isolation duration), the system is in the `No Compliance' regime and the prevalence is independent of $q$.
When $q$ reaches the value at which it leaves the `No Compliance' regime to enter the `Partial Compliance' regime, the prevalence starts decaying monotonically with increasing $q$: shortening the isolation duration allows to increase the population's adoption levels and to better contain the spread of the pathogen. 
Above a certain value of $q$ though, the system changes regime once again and enters the `Full Compliance' regime, where further increasing $q$ (and shortening the isolation duration) makes the prevalence grow again: the adoption level has reached a ceiling and shorter isolation periods can only be detrimental for the containment of disease spread. 
The value of $q$ that minimises the equilibrium prevalence is thus the value of $q$ that separates the `Partial Compliance' and the `Full Compliance' regimes. This value of $q$ corresponds to the longest isolation period such that everyone complies. 
Still, a lower prevalence is always achieved for longer optimal isolation periods.
The improvement offered by choosing the optimal duration might be modest (Figure~\ref{fig:figure2}, bottom panels) or dramatic (top right panel, for rather low costs $c$), depending on the celerity to enter isolation $u$ and on the caution after exiting isolation $\epsilon$.
\par
However, such an optimal isolation period does not exist in all regions of the space of parameters.
In particular, it is necessary that the cost of quarantine is sufficiently large. Indeed, if $c$ is too small ($c < 1 - \frac{1+u}{\mathcal{R}_0}$), a large fraction of the population cooperates, and there is no further compliance to be won by shortening the isolation period.
\par
In Figure~\ref{fig:figure3} we show the impact of different choices of the cost $c$ of quarantine on the existence of an optimal isolation duration. In Fig.~4a, we focus on the central panel of Figure~\ref{fig:figure2}, and propose the exploration of several trajectories in the $c$ vs $q$ space. First, we consider different constant costs $c$ and make the parameter $q$ vary. Then, we postulate various correlations between $c$ and $q$, and explore their impact on the equilibrium prevalence. Since we expect that the perspective of a long isolation duration weighs heavily on complying individuals, we propose two minimal ways in which the cost of isolation decays with $q$ (light green and blue lines). However, we can also imagine that the perceived cost of isolation is the largest for very short isolation periods. We thus also provide one example of the behaviour of the system under such an unrealistic scenario (red line). In Fig.4b, we see how the choice of the cost affects the existence of an optimal isolation period as well as the range of values of prevalence that can be achieved.

\begin{figure}
    \centering
    \includegraphics[width=\textwidth]{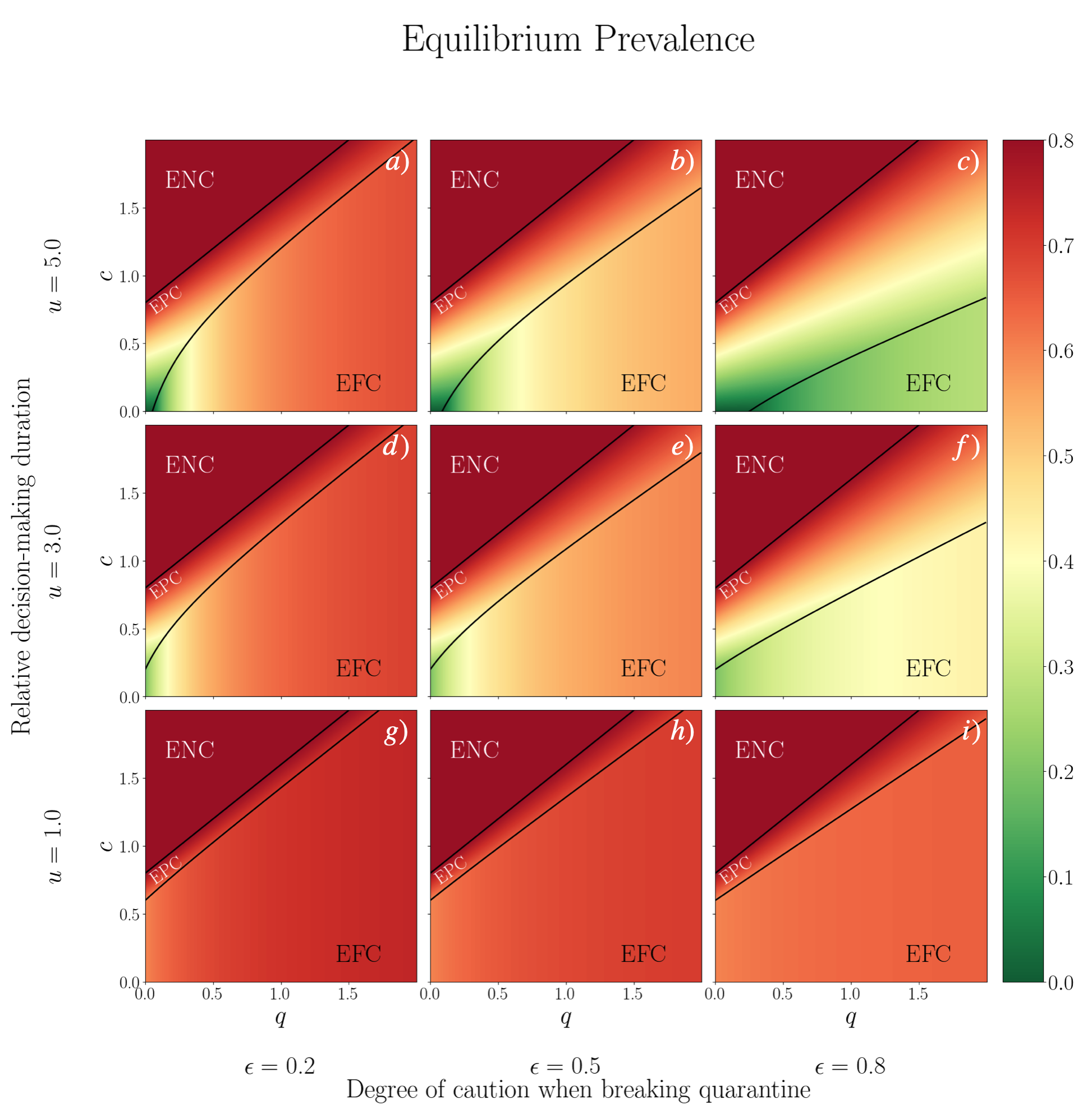}
    \caption{\textbf{Equilbrium prevalence}. Heatmaps of the equilibrium prevalence depending on the cost $c$ of quarantine, the lack of coverage  $q$ of the infectious period by quarantine, the rapidity of entrance in isolation $u$ and the degree of caution $\epsilon$ when breaking quarantine. The limits separating the ENC and EPC regimes, as well as the ones separating the EPC and EFC regimes are shown (black lines). The basic reproduction number is set to $\RO = 5$.}
    \label{fig:figure2}
\end{figure}

\begin{table}
\caption{\textbf{Properties of the equilibria associated to System~\eqref{eq:adim}.} Only the values of $\I^*$ and $x^*$ are displayed, since the other fractions can readily be deduced from their values, using System~\eqref{eq:UQF}. `Existence conditions' are obtained by requiring $\I^*,x^*\in [0,1] \times [0,1]$.\label{table:eq}}
\begin{center}
\begin{adjustbox}{width=\linewidth}
\begin{tabular}{|c|c|c|c|c|c|}
  \hline
  \textbf{Name} & \textbf{Abbr.} & \textbf{Value of $x^*$} & \textbf{Value of $\I^*$} & \textbf{Existence} & \textbf{Stability} \\
  \hline \hline
 \makecell{Disease-Free with\\No Compliance} & DFNC & $0$ & $0$ & $\RO \leq 1$ & $\RO \leq 1$ \\
  \hline
  \makecell{Disease-Free with\\Full Compliance} & DFFC & $1$ & $0$ & $\R_1 \leq 1$ & Never \\
  \hline \hline
  \makecell{Endemic with\\ No Compliance} & ENC & $0$ & $1-\frac{1}{\RO}$ & $\RO > 1$ & $1 - \frac{1}{\RO} \leq \frac{c}{1+q}$ \\
  \hline
  \makecell{Endemic with\\ Partial Compliance} & EPC & \makecell{$x^* \in (0,1)$\\given by~\eqref{eq:xstar}} & $\frac{c}{1+q}$ & \makecell{$\RO > 1$, $\frac{c}{1+q}\in$ \\ $\left(1-\frac{1}{\R_1}, 1 - \frac{1}{\RO}\right)$} & \makecell{At least for $\kappa$\\`small enough'} \\
  \hline
  \makecell{Endemic with\\ Full Compliance} & EFC & $1$ & $1 - \frac{1}{\R_1}$ & $\R_1 > 1$ & $\frac{c}{1+q} \leq 1-\frac{1}{\R_1}$\\
  \hline 
\end{tabular}
\end{adjustbox}
\end{center}
\end{table}
\paragraph{The source of information plays a role in the transient phase of the dynamics.}
Although the relative importance given to prevalence compared to incidence plays no role on the equilibria, we find that it affects the initial transient phase of the dynamics. More specifically, it appears that the peak of the prevalence is always minimised by a smaller value of $p$, i.e. a preference given to the incidence rate as a source of information as opposed to the prevalence. In Figure~\ref{fig:figure4} we show how increasing the parameter $p$ in the `Partial Compliance' regime leads to a higher peak prevalence without significantly affecting the average prevalence over the entire time window considered, i.e. the temporal average. Similarly, the authors of~\cite{Amaral_2021} found that a parameter of perceived cost could have a large effect on the height of a single wave epidemic peak, while affecting the final epidemic size in a negligible way. In our model it appears that while under certain conditions it is equivalent to use prevalence or incidence as a source of information, in others it is beneficial to consider incidence. This effect is the strongest for faster dynamics, i.e. higher basic reproduction numbers and higher volatility of opinions. This result is surprising in the sense that using incidence as the indicator of the responsibility of individuals towards the community allows them in principle to refuse to isolate when the prevalence is excessively large. However, the incidence grows faster at the beginning of an outbreak compared to the prevalence, allowing a swifter reaction of the population to the outbreak. We stress though that the difference between the curves, in most of the cases, remains small. We also show that these conclusions hold when simulating the model on the multi-layer empirical network described in Section \ref{section:network_setting}.

\paragraph{The decision process only affects undecided individuals.}
Our model demonstrates the capacity for sustained oscillations under specific conditions. 

The coupling of disease and behaviour dynamics in models has been shown to produce oscillations, particularly in the presence of an imitation dynamics and a sufficiently high opinion update rate~\cite{Bauch_2005, Zhang_2023}. However, there are nuances in the nature of the oscillations that can arise. For instance, a model for vaccination~\cite{Bauch_2005} identifies a threshold above which oscillations are sustained, a finding that contrasts with a model for mask-wearing~\cite{Martin25}, where only damped oscillations are observed. This discrepancy can be attributed to the delayed effect of a change in strategy on the disease dynamics.
In~\cite{Martin25}, 
when individuals change strategy, their new behaviour has an immediate impact on the transmission rate. Conversely, in~\cite{Bauch_2005}, a parent's decision to vaccinate their child is irreversible once immunization has occurred at birth. Thus, a change in their opinion can impact the system only through the birth of another child.
\par
Our current model adopts a similar philosophical stance regarding decision irreversibility. When individuals enter the $U$ compartment, they make a definitive choice regarding self-isolation. Any potential regret concerning this strategy necessitates recovery from the current infection, followed by a subsequent reinfection, before a new decision can be made.

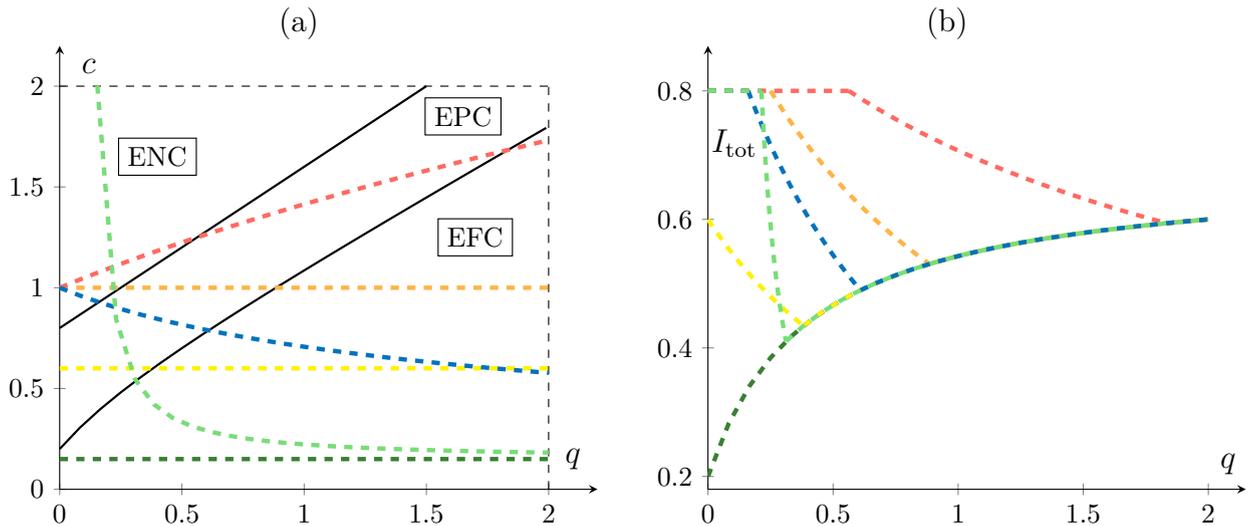
\begin{figure}
  \centering
  \begin{subfigure}[b]{0.48\textwidth}
    \caption{}
    \begin{adjustbox}{width=\linewidth}
      \begin{tikzpicture}
        \begin{axis}[axis lines=left, xmin=0,xmax=2.2, ymin=0,ymax=2.2]
          \addplot[domain=0:2,dashed] coordinates {(2,0)(2,2)};
          \addplot[domain=0:2,dashed, name path=TOP] coordinates {(0,2)(2,2)};
          \node [fill=white,draw=black,anchor=center] at (170,125) {EFC};
          \node [fill=white,draw=black,anchor=center] at (165,185) {EPC};
          \node [fill=white,draw=black,anchor=center] at (40,165) {ENC};

          \addplot[domain=0:1.5, thick, name path=ENC] {(1 - 1 / 5) * (1 + x)};
          \addplot[domain=0:1.99, thick, name path=EFC] {(1 - 1 / ((1 - (3 / (1 + 3)) * ((1 + 0.5 * x) / (1 + x))) * 5)) * (1 + x)};
          \path [name path=BOTTOM] (0,0) -- (200,0); 

          \addplot[domain=0:2, ultra thick, pastelOrange,dashed]{1};
          \addplot[domain=0:2, ultra thick, yellow,dashed]{0.6};
          \addplot[domain=0:2, ultra thick, OliveGreen,dashed]{0.15};
          \addplot[domain=0:2, ultra thick, pastelRed,dashed]{sqrt(1+x)};
          \addplot[domain=0:2, ultra thick, RoyalBlue,dashed]{1/sqrt(1+x)};
          \addplot[domain=0.154:2, ultra thick, pastelGreen,dashed]{0.1*exp(0.4*(1+1/x))};

          \node [anchor=west] at (5,210) {\large {$c$}};
          \node [anchor=north] at (210,25) {\large {$q$}};
        \end{axis}
      \end{tikzpicture}
    \end{adjustbox}
  \end{subfigure}
  \hfill
  \begin{subfigure}[b]{0.48\textwidth}
    \caption{}
    \begin{adjustbox}{width=\linewidth}
      \begin{tikzpicture}
        \begin{axis}[axis lines=left,
          xmin=0,
          xmax=2.15, ymin=0.18,
          ymax=0.87]
          \addplot[domain=0:2, ultra thick, OliveGreen,dashed]{1 - 1/(5*(1-3/(1+3)*(1+0.5*x)/(1+x)))};
          \addplot[domain=0:0.3814871396610922, ultra thick, yellow,dashed]{0.6/(1+x)};
          \addplot[domain=0.3814871396610922:2, ultra thick, yellow,dashed]{1 - 1/(5*(1-3/(1+3)*(1+0.5*x)/(1+x)))};

          \addplot[domain=0:1/(1-1/5) - 1, ultra thick, pastelOrange,dashed]{1 - 1/(5)};
          \addplot[domain=1/(1-1/5) - 1:0.8847996811054291, ultra thick, pastelOrange,dashed]{1/(1+x)};
          \addplot[domain=0.8847996811054291:2, ultra thick, pastelOrange,dashed]{1 - 1/(5*(1-3/(1+3)*(1+0.5*x)/(1+x)))};

          \addplot[domain=0:(3/4)^2, ultra thick, pastelRed,dashed]{1 - 1/(5)};
          \addplot[domain=(3/4)^2:1.834, ultra thick, pastelRed,dashed]{1/sqrt(1+x)};
          \addplot[domain=1.834:2, ultra thick, pastelRed,dashed]{1 - 1/(5*(1-3/(1+3)*(1+0.5*x)/(1+x)))};

          \addplot[domain=0:(5/4)^(2/3) - 1, ultra thick, RoyalBlue,dashed]{1 - 1/(5)};
          \addplot[domain=(5/4)^(2/3) - 1:0.6094, ultra thick, RoyalBlue,dashed]{1/(1+x)^(3/2)};
          \addplot[domain=0.6094:2, ultra thick, RoyalBlue,dashed]{1 - 1/(5*(1-3/(1+3)*(1+0.5*x)/(1+x)))};

          \addplot[domain=0:0.2136, ultra thick, pastelGreen,dashed]{1 - 1/(5)};
          \addplot[domain=0.2136:0.3118, ultra thick, pastelGreen,dashed]{0.1*exp(0.4*(1+1/x))/(1+x)};
          \addplot[domain=0.3118:2, ultra thick, pastelGreen,dashed]{1 - 1/(5*(1-3/(1+3)*(1+0.5*x)/(1+x)))};

          \node [anchor=east] at (23.5,538) {\large {$\I$}};
          \node [anchor=west] at (201,40) {{\large $q$}};
        \end{axis}
      \end{tikzpicture}
    \end{adjustbox}
  \end{subfigure}

  \caption{\textbf{Conditions for an optimal duration of quarantine}. a): Different choices of the parameter $c$: $c = 0.15$ (dark green), $c(q)=0.1*\exp\left(0.4*\left(1 + \frac{1}{q}\right)\right)$ (light green), $c = 0.6$ (yellow), $c(q)=\sqrt{1+q}$ (blue), $c = 1$ (orange) and $c(q)=\frac{1}{\sqrt{1+q}}$ (red). b): Equilibrium prevalence as a function of $q$ when considering the values of $c$ shown in panel a. Parameter values: $\RO = 5$, $u = 3$, $\epsilon = 0.5$. \label{fig:figure3}}
\end{figure}

\begin{figure}
    \centering
    \includegraphics[width=\textwidth]{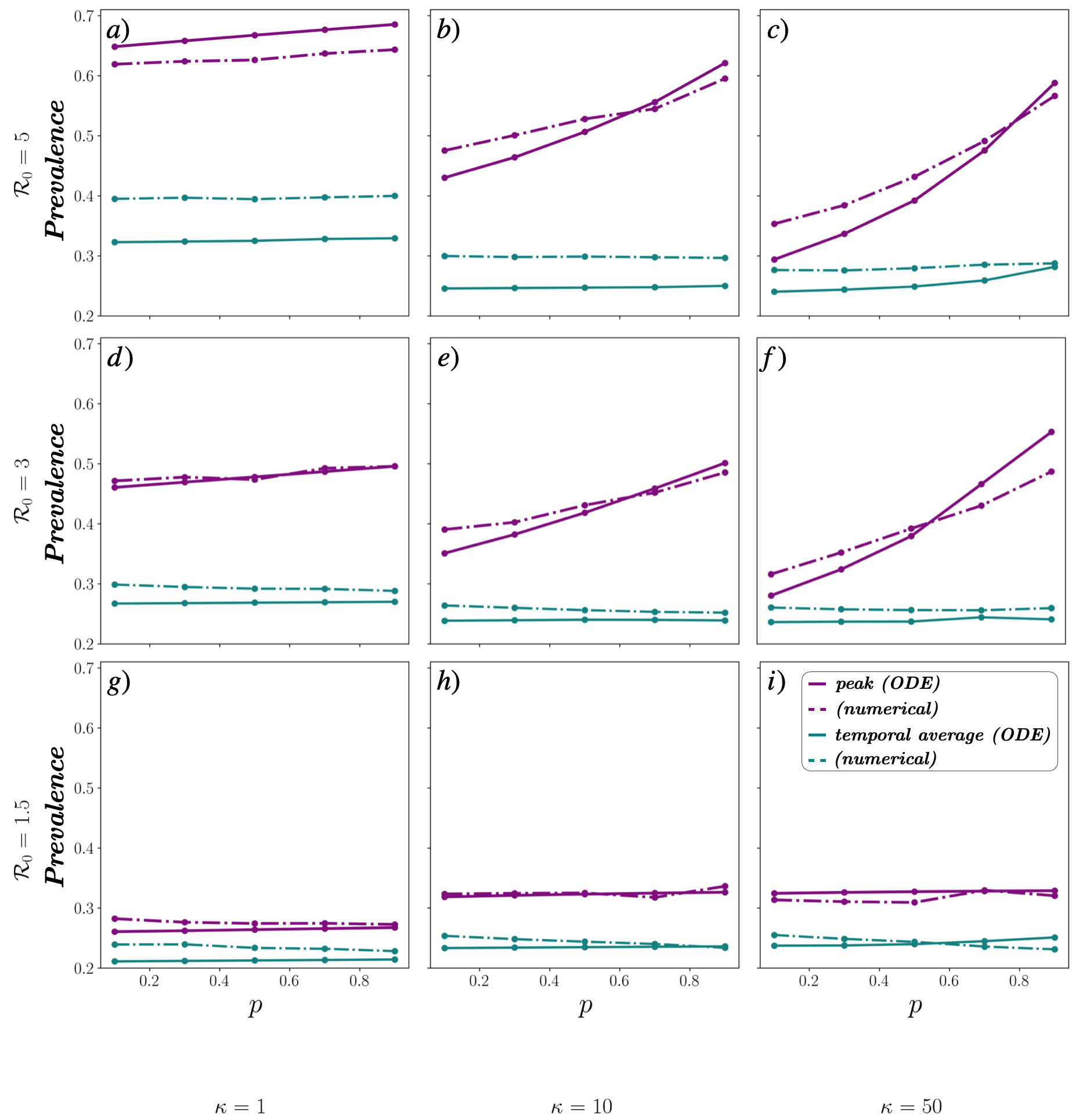}
    \caption{\textbf{The incidence as an indicator of the state of the outbreak can help to flatten the curve}. We show the value of the peak prevalence (purple) and its temporal average (dark cyan) as functions of the parameter $p$, for various choices of the basic reproduction number $\RO$ and of the volatility of opinions $\kappa$. In all panels the system is in the `Partial Compliance' regime. We show both the effect of $p$ in the ODE formulation (continuous line) and in the stochastic simulations run over an empirical two-layers network (dot-dashed line). Parameter values: $u = 5$, $q = 0.25$, $c = 0.3$, $p = 0.4$, $\epsilon = 0.5$, $x_0 = 0.4$.\label{fig:figure4}
    }
    
\end{figure}

\section{Conclusions}
\label{section:discussion}
\par
We build an SIS-based compartmental model that accounts for the simultaneous spread of i) a pathogen on a network of physical contacts and of ii) opinions driving adoption of self-isolation on a network of social influence. By taking into account the coupled evolution of opinions and disease, we want to identify optimal implementations of the self-isolation measure combined with post-quarantine awareness.
\par
The compartmental model that we propose allows for partial adoption of the self-isolation measure, pre-isolation and post-isolation infections. In the compartment of infectious individuals who break quarantine, the infectiousness is assumed to be reduced as a consequence of increased awareness, resulting for example in behaviours such as mask wearing or social distancing. 
The partial adoption of the measure refers to the binary choice that individuals make: to isolate or not to isolate. This decision is governed by an imitation dynamics and depends on a careful comparison of one's current decision with the one of the contacts. The costs of the two possible strategies are carefully weighted by the deciding individuals who are considered to behave rationally: a change of strategy only takes place if the opposite strategy would incur a lower cost. On the one hand, the cost of complying with the isolation measure is assumed to increase with the isolation duration and with the incurred cost. On the other hand, the cost of not complying arises from the state of the epidemic. The prevalence is often considered as the indicator of the state of an outbreak affecting the compliance of players in imitation dynamics models. 
However, the incidence rate is another standard indicator in epidemiology, and we here study how this additional source of information can affect the dynamics of the coupled disease-behaviour model. Instead of using solely the prevalence to assess the state of the epidemics, we thus model the cost of quarantine through a convex combination of prevalence and incidence.

\vspace{1em}

We find that despite a seemingly complex intertwined behaviour, the equilibria of the dynamics are fairly simple. 
There are two disease-free equilibria and three endemic equilibria that can be distinguished by the share of complying individuals in the stationary limit. The `No Compliance' and the `Full Compliance' endemic equilibria are stable when the cost of quarantine is too big or too small, respectively. In the regions of the parameter space where these equilibria are stable, the system collapses to a long-time state that is independent of the imitation dynamics. The third endemic equilibrium is characterised by a partial level of compliance in the population. This regime becomes particularly relevant in more optimistic scenarios, characterised by lower basic reproduction numbers, shorter delays to enter isolation (large $u$) and larger levels of caution after the isolation period (large $\epsilon$). 
In this regime, the equilibrium prevalence reduces to the total cost to self-isolate which is the product of the (relative) average time spent in isolation (through $q$) and the cost per time unit (through $c$) (see Equation~\eqref{eq:ItotEPC}). In contrast, the equilibrium fraction of complying individuals decays with $u$ and $\epsilon$. Consequently, encouraging individuals to hurry into isolation and to stay cautious after exiting isolation allows to achieve the same level of disease circulation with fewer people complying, thus helping to relieve the social burden imposed by the implementation of the self-isolation measure.
\par
In this work we also address the well-known trade-off between the benefits of longer isolation periods and of widespread compliance in breaking chains of transmission. The imitation dynamics being governed by this very trade-off, we are interested in seeing the conditions for its resolution within our framework.
We find that there exists a duration able to minimise the equilibrium prevalence, given by the longest isolation duration that still allows full compliance of the population, at a given cost of quarantine. The existence of this minimum implies that it is essential to reach high levels of adoption, even if this means that the isolation duration needs to be shortened. However, if the cost of isolation is too small, it is always beneficial to have the longest isolation duration, while if the cost is too high, increasing the isolation duration further than a certain point brings no benefit at all. Still, as expected, the equilibrium prevalence is always the smaller the lower the cost of quarantine. To summarise, in many settings, it is beneficial to shorten the isolation duration, while a lower cost of quarantine is always advantageous. Reducing the cost of quarantine can be achieved through incentives such as paid sick leave.
\par
Although the parameter $p$ that indicates the relative importance given to prevalence compared to the incidence rate plays no role on the equilibria nor on their stability conditions, it turns out that it may affect the transient phase of the dynamics. More specifically, either different choices of $p$ give the same dynamics, or a lower $p$ systematically gives a lower peak prevalence
in the transient phase. We measure this effect both in the mean-field framework and on a multiplex network. We find that favouring the incidence rate as an indicator of the state of an outbreak might help limit the amplitude of an initial overshoot. Additionally, the observed advantage of incidence over prevalence is the stronger, the faster the dynamics, i.e. for a larger reproduction number and a larger volatility of opinions. 
\par
We test our analytical predictions against numerical simulations on an empirical multiplex network from the Copenhagen Network Study. Our mean-field results hold fairly.

\vspace{1em}

This study comes with a set of limitations including: (i) the minimal design of a compartmental model counting a small number of compartments, (ii) the assumptions regarding how adoption of self-isolation spreads within the population, (iii) the representation of how individuals access information about the state of the outbreak, and (iv) the preliminary nature of our investigation of the impact of network structure on our predictions.


First off, for the sake of simplicity and so that we could draw analytically tractable conclusions, we have decided to work with a compartmental model -~that comes with its own limitations~- and we have opted for a limited amount of compartments.
Being a compartmental model, we have assumed a constant and homogeneous infectiousness and susceptibility of individuals. To partially account for a lower pathogen load at the end of the infectious period, we have however incorporated a post-quarantine reduced infectiousness. This reduced infectiousness though is missing in the course of infection of individuals who do not comply. The rates of transition from one compartment to the next are assumed to be Poissonian, despite evidence that, for example, the infectious period follows more complex distributions~\cite{He_2020,Marc_2021}. A more detailed modelling of the time spent in isolation could be obtained using age-structured partial differential equations (PDE), for instance considering the time elapsed since the start of quarantine~\cite{Wu_2023} or structuring the population with respect to the pathogen load~\cite{Della_Marca_2023}.
Apart from limitations affecting compartmental models in general, we have opted for a limited amount of compartments to keep analytical derivations tractable and interpretable.
For example, we have decided to disregard the isolation of healthy individuals. Such a situation would be interesting to account for a more realistic toll of the isolation measure. Healthy individuals may indeed be asked to isolate from the community when preventive quarantine is implemented (e.g. because of the lack of accurate, widely distributed and affordable diagnostic tests) instead of the self-isolation of only known cases treated in the present work. Moreover, we have not considered isolation periods that exceed the time for recovery, needlessly but as a consequence of the difficulty in estimating the end of the infectious period.
In addition, in order to better describe the evolution of infectious diseases such as COVID-19, an exposed compartment and a transitory removed compartment accounting for temporary immunity are generally considered to be important~\cite{Tori_2022, Zhang_2023}. In this light, our model corresponds to a worst case scenario.
We also disregarded the possibility for individuals to leave the dynamics as a consequence of death or of permanent measures such as vaccination.

Secondly, we have made some assumptions, that should be examined in future work, on how information on the state of the outbreak is accessed. To keep the model simple, the information about the state of the epidemic has been treated as memoryless, although some works have relaxed this assumption~\cite{Onofrio2007,Kyrychko_2023}.
We have assumed that individuals access real instantaneous values of the prevalence and of the incidence rate and we have completely disregarded delays or deviations from the actual values due e.g., to cognitive biases and underdetection. If the underdetection is systematic we would expect it to affect, in our model, the effective rate of change of opinion and the cost of quarantine.
Moreover, the addition of a compartment for hospitalised patients would allow to account also for the number of hospitalisations as an indicator of the severity of the outbreak and the saturation of health care~\cite{Verma_2025}.
Distinguishing between local and global information about the state of an outbreak would be interesting for a future analysis on networks~\cite{Cascante22, Silva_2023} or using a PDE approach~\cite{Banerjee_2024}. Indeed, these frameworks could account for biases arising from estimations of the severity or state of an outbreak based on observations made in people's local social environment and compare the effect of local and population-level information.


Thirdly, our analysis has limitations concerning the way compliance is assumed to evolve in time.
There is evidence that complex contagion better describes the propagation of opinions on social networks~\cite{Monsted17, Sprague17}. In our model, exposure to a single individual with opposite strategy is sufficient to allow someone to challenge its strategy.
Our model also disregards homophily, in particular homophily in the network of social influence~\cite{He2024}. We expect a threshold type of model and the introduction of homophily to reduce the parameter $\kappa$ that accounts for the strategy update rate, but the phenomenology could change in more complex ways.  We did not consider that individuals could spontaneously change their mind regarding the compliance with the control measure~\cite{Poletti_2009}. Also, as briefly discussed here above, we have considered that a change of opinion only affects individuals who are yet to decide whether to isolate or not. As a consequence, isolating individuals who change opinion are assumed not to exit quarantine immediately. Instead, it would make sense to consider that individuals who change their mind during the isolation period could abort it. We expect in this case that sustained oscillations would no longer be possible, similarly to what was found for mask wearers \cite{Martin25}. 

Finally, we have only performed a preliminary comparison of analytical predictions against stochastic simulations run on an empirical network. No systematic analysis of the impact of different network structures on the transient and stationary parts of the dynamics, for different values of the parameters, has been carried out, yet. Our results suggest that the network structure affects our analytical predictions in the stationary limit and may affect the amplitude of observed oscillations (potentially as a consequence of weak multi-layer degree correlations). It would be important in the future to check what network features are responsible for this discrepancy and whether specific network structures are able to affect the dependence of population-level quantities on the model parameters.

\section*{Acknowledgements}

We thank PEPS-CNRS for funding the project, François Castella for his critical look on the model, as well as Institut Agro and the Irmar lab (France) for inviting GdM. HM is grateful to Gerardo I\~niguez and the Computing Science lab of Tampere University (Finland) for their hospitality. 
HM acknowledges support of the `Chair Modélisation Mathématique et Biodiversité (MMB)' of Veolia - Ecole polytechnique - Museum national d’Histoire naturelle - Fondation X, and of Sanofi through the FluCov project. Finally, the authors are thankful to the organising team of the event Current
Challenges Workshop on Epidemic Modelling, Girona 2023, where they met and started to discuss the present work.

\section*{Declaration of interests}
All authors declare no conflicts of interest in this paper.

\newpage

\printbibliography

\newpage
\appendix

\section{Recovery from U, I, Q and F occurs with the same rate\label{sec:rate_mu}}

Individuals in compartment $Q$ can recover through two \textit{alternative} paths: $Q \rightarrow S$ and $Q \rightarrow F \rightarrow S$. Let $t_S$ be the time for the event $Q \rightarrow S$ and $t_F$ the time for the event $Q \rightarrow F$. The probability that the path $Q \rightarrow S$ is chosen is:
\[
    P(t_S < t_F) = P(t_S = t) P(t_F > t),
\]
where $P(t_S = t) = \mu e^{-\mu t}$ and 
\[
    P(t_F > t) = \int_{t}^{\infty} \mu_Q e^{-\mu_Q s} \d s = e^{-\mu_Q t}.
\]
Similarly, the probability that the path $Q \rightarrow F$ is chosen is:
\[
    P(t_F < t_S) = \mu_Q e^{-\mu_Q t} e^{-\mu t}.
\]
The normalisation of $P(t_S < t_F)$ is 
\[
    N_S = \int_0^{\infty} \mu e^{-(\mu + \mu_Q) t}\d t = \frac{\mu}{\mu+\mu_Q}
\]
while, the normalisation of $P(t_F < t_S)$ is 
\[
    N_F = \int_0^{\infty} \mu_Q e^{-(\mu + \mu_Q) t} \d t = \frac{\mu_Q}{\mu+\mu_Q}.
\]
Thus, the average time taken to recover through the direct recovery path $Q \rightarrow S$ is:
\begin{align*}
    \tau_{Q\rightarrow S} &= \int_0^{\infty} t \frac{P(t_S = t) P(t_F > t)}{N_S} dt\\
    &=  \int_0^{\infty} (\mu+\mu_Q)t e^{-(\mu + \mu_Q)t} dt\\
    &= \frac{1}{\mu+\mu_Q},
\end{align*}
where in the last step we performed an integration by parts.
This average time is the same as the one of the transition $Q \rightarrow F$. The average time spent in isolation is thus also $\frac{1}{\mu+\mu_Q}$. This is because the sharpest exponential dominates the other one, whether it be the one of the $Q \rightarrow S$ or of the $Q \rightarrow F$ transition. The probability of the $Q \rightarrow S$ transition is:
\[
    p_{Q\rightarrow S} = \int_0^{\infty} P(t_S = t) P(t_F > t) dt= N_S= \frac{\mu}{\mu+\mu_Q}.
\]
For the $Q \rightarrow F$ transition, it is instead:
\[
    p_{Q \rightarrow F} = \int_0^{\infty} P(t_F = t) P(t_S > t) dt= N_F= \frac{\mu_Q}{\mu+\mu_Q}.
\]
So overall,
\begin{align*}
    \tau &= \tau_{Q\rightarrow S} p_{Q\rightarrow S} + \tau_{Q\rightarrow F\rightarrow S} p_{Q\rightarrow F\rightarrow S}= \frac{1}{\mu+\mu_Q} \frac{\mu}{\mu+\mu_Q} + \left(\frac{1}{\mu+\mu_Q} + \frac{1}{\mu}\right) \frac{\mu_Q}{\mu+\mu_Q}\\
    &=\frac{1}{\mu},
\end{align*}
the average time for the recovery of individuals in compartment $Q$ is simply the average time of a Poissonian transition with rate $\mu$. The same reasoning can be repeated for compartment $U$, that also shows multiple alternative paths to recovery. In this way, the average time of recovery of $\I = U + I + Q + F$ is also $1/\mu$, as expected for an SIS model.

\section{Stability of the fixed points\label{apdx:stab}}

We begin this section by recalling the Routh-Hurwitz criteria, specified for polynomials of order $3$ and $5$, that we will use to prove the local asymptotic stability of the EFC and EPC equilibria.

\subsection{Routh-Hurwitz criteria}

\begin{lemma}[Routh-Hurwitz Conditions for Orders 3 and 5~\cite{Allen_2007}]
Let the monic real polynomial \( P(\lambda) = \lambda^n + \alpha_1 \lambda^{n-1} + \cdots + \alpha_{n-1} \lambda + \alpha_n \) be given. Then \( P(\lambda) \) is Hurwitz stable—that is, all roots have strictly negative real parts—if and only if the following conditions hold:

\medskip
\textbf{For \( n = 3 \):}

\[
\begin{cases}
\alpha_1 > 0,\\
\alpha_3 > 0,\\
\alpha_1 \alpha_2 > \alpha_3.
\end{cases}
\]

\medskip
\textbf{For \( n = 5 \):}

\[
\begin{cases}
\alpha_i > 0, \text{ for i} \in \{1, 2, 3, 4, 5\}\\
\alpha_1 \alpha_2 \alpha_3 > \alpha_3^2 + \alpha_1^2 \alpha_4,\\
(\alpha_1 \alpha_4 - \alpha_5)(\alpha_1 \alpha_2 \alpha_3 - \alpha_3^2 - \alpha_1^2 \alpha_4) > \alpha_5(\alpha_1 \alpha_2 - \alpha_3)^2 + \alpha_1 \alpha_5^2.
\end{cases}
\]

These algebraic conditions are necessary and sufficient for the roots of \( P(\lambda) \) to lie entirely in the open left-half of the complex plane.
\end{lemma}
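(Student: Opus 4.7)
The plan is to invoke the general Hurwitz determinantal criterion and specialise it to degrees $3$ and $5$. I will take as known the classical result that a real monic polynomial $P(\lambda) = \lambda^n + \alpha_1\lambda^{n-1} + \cdots + \alpha_n$ is Hurwitz stable if and only if the $n$ leading principal minors $H_1, \ldots, H_n$ of its associated $n \times n$ Hurwitz matrix are all strictly positive; this is proved in Allen~\cite{Allen_2007} and in many standard references, either via the Hermite--Biehler interlacing theorem or by induction on the rows of the Routh array. Given this, the lemma reduces to a mechanical computation of the leading principal minors in the two cases of interest and the rearrangement of the inequalities $H_i > 0$ into the compact form of the statement.

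For $n = 3$, I would write down the $3 \times 3$ Hurwitz matrix
\[
H = \begin{pmatrix} \alpha_1 & \alpha_3 & 0 \\ 1 & \alpha_2 & 0 \\ 0 & \alpha_1 & \alpha_3 \end{pmatrix},
\]
and immediately read off $H_1 = \alpha_1$, $H_2 = \alpha_1\alpha_2 - \alpha_3$ and $H_3 = \alpha_3\, H_2$. The requirement $(H_1, H_2, H_3) > 0$ is then equivalent to $\alpha_1 > 0$, $\alpha_3 > 0$ and $\alpha_1\alpha_2 > \alpha_3$, which is exactly the triple of conditions stated for $n = 3$. Both directions of the equivalence are immediate from these three identities.

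For $n = 5$, the same strategy applies but the computation is heavier. I would build the $5 \times 5$ Hurwitz matrix
\[
H = \begin{pmatrix}
\alpha_1 & \alpha_3 & \alpha_5 & 0 & 0 \\
1 & \alpha_2 & \alpha_4 & 0 & 0 \\
0 & \alpha_1 & \alpha_3 & \alpha_5 & 0 \\
0 & 1 & \alpha_2 & \alpha_4 & 0 \\
0 & 0 & \alpha_1 & \alpha_3 & \alpha_5
\end{pmatrix},
\]
and compute its five leading principal minors $H_1, \ldots, H_5$. The coefficient sign conditions $\alpha_i > 0$ (Stodola's necessary condition for Hurwitz polynomials) can be extracted from pairwise combinations of the minor inequalities, and the factorisation $H_5 = \alpha_5 H_4$ absorbs one of the conditions. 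The inequalities $H_3 > 0$ and $H_4 > 0$ are then the ones that, after expansion and regrouping, match the two composite inequalities of the statement.

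The hard part will be the algebraic rearrangement of $H_4$: a direct Laplace expansion produces a sum of many monomials in the $\alpha_i$, and one must regroup them carefully to recover the compact form. My strategy would be to expand $H_4$ along its last row, whose only non-zero entries are $1$, $\alpha_2$ and $\alpha_4$, reducing the $4 \times 4$ determinant to a signed combination of three $3 \times 3$ sub-determinants of Hurwitz type, two of which can be recognised as variants of the minors already computed in the $n = 3$ case. A patient collection of like terms then yields the stated inequality. Beyond this bookkeeping, and the dual task of verifying that the conditions $\alpha_i > 0$ are indeed recoverable from the minor positivities (the Liénard--Chipart simplification is helpful here to avoid redundancy), no conceptual obstruction remains.
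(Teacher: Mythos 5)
The paper offers no proof of this lemma at all: it is recalled as a known result and delegated to the cited reference, so your attempt has to be judged against the classical theory rather than against an argument in the text. Your $n=3$ computation is correct and complete: the leading principal minors of the Hurwitz matrix are $H_1=\alpha_1$, $H_2=\alpha_1\alpha_2-\alpha_3$, $H_3=\alpha_3H_2$, and their positivity is plainly equivalent to the stated triple of inequalities.

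The $n=5$ part, however, has a genuine gap: the claim that $H_3>0$ and $H_4>0$ ``after expansion and regrouping, match the two composite inequalities of the statement'' is not true, so no amount of patient bookkeeping will close it. A direct expansion gives $H_3=\alpha_1\alpha_2\alpha_3-\alpha_3^2-\alpha_1^2\alpha_4+\alpha_1\alpha_5$, whereas the lemma's second inequality asserts positivity of $H_3-\alpha_1\alpha_5$; and using the identity $\alpha_1H_4=(\alpha_1\alpha_4-\alpha_5)H_3-\alpha_5H_2^2$ one checks that the lemma's third inequality is exactly $\alpha_1\left(H_4-\alpha_1\alpha_4\alpha_5\right)>0$, i.e.\ $H_4>\alpha_1\alpha_4\alpha_5$, not $H_4>0$. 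The displayed conditions are therefore strictly stronger than positivity of the five minors, and the two sets are not interchangeable: the Hurwitz-stable polynomial $(\lambda^2+\epsilon\lambda+1)(\lambda+1)^3$ with small $\epsilon>0$ (coefficients close to $3,4,4,3,1$) has all coefficients and all minors positive and satisfies the second inequality ($48>43$), yet violates the third one (left side $\approx 40$, right side $\approx 67$). Consequently your strategy can at best deliver the sufficiency half --- the stated inequalities imply $H_i>0$ for all $i$, hence stability, which follows readily from the two identities above and is in fact the only direction the paper ever uses in its stability analysis --- while the ``only if'' half cannot be obtained by identifying the inequalities with the minors, and, as the example shows, is itself questionable in the exact form quoted. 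At minimum you must replace the claimed term-by-term matching with these explicit identities and address (or restrict) the necessity direction.
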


\subsection{Jacobian matrix}
To assess the stability of each endemic equilibrium, we compute the jacobian matrix $J_{\mathcal{E}} =$
\[
\left(
\begin{smallmatrix}
\frac{\RO}{\mathcal{R}_{x^*}} - \R_{x^*}  & 0 & -\frac{\RO}{\mathcal{R}_{x^*}} & -\epsilon \frac{\RO}{\mathcal{R}_{x^*}} & 0\\
\frac{\RO}{\mathcal{R}_{x^*}} - \R_{x^*} + 1 & -(1+u)  & -\frac{\RO}{\mathcal{R}_{x^*}} & -\epsilon \frac{\RO}{\mathcal{R}_{x^*}} & 0 \\
0 & u x^* & -(1+q) & 0 & \frac{u}{1+u}\left(1 - \frac{1}{\R_{x^*}}\right) \\
0 & 0 & q & -1 & 0 \\
\kappa x^*(1-x^*)\left[1 + (1 - p)\left(\frac{\RO}{\mathcal{R}_{x^*}} - \mathcal{R}_{x^*}\right)\right] & 0 & -\kappa x^*(1-x^*) (1 - p)\frac{\RO}{\mathcal{R}_{x^*}} & -\epsilon \kappa x^*(1-x^*) (1 - p)\frac{\RO}{\mathcal{R}_{x^*}} & \kappa (1 - 2x^*) \left[1 - \frac{1}{\mathcal{R}_{x^*}} - \frac{c}{1 + q} \right]
\end{smallmatrix}
\right)
\]
and specify its values based on $\mathcal{E} \in \left\{\mathrm{ENC, EPC, EFC}\right\}$, i.e. $x^* = 0$, $x^* = 1$, or in-between.

\subsection{Disease-free fixed points of the dynamics}
For the two disease-free equilibria, we recover classical conditions: they always exist, and DFNC is locally asymptotically stable if $\RO \leq 1$ (the special case $\RO = 1$ being dealt with by hand), while DFFC is never stable. 

\subsection{Endemic fixed points of the dynamics}

\subsubsection{ENC equilibrium}
For the ENC equilibrium, the jacobian matrix reduces to
\[
\begin{pmatrix}
1 - \RO  & 0 & -1 & - \epsilon & 0 \\
2 - \RO & -(1+u)  & -1 & - \epsilon & 0 \\
0 & 0 & -(1+q) & 0 & \frac{u}{1+u}\left(1 - \frac{1}{\RO}\right) \\
0 & 0 & q & -1 & 0 \\
0 & 0 & 0 & 0 & \kappa \left[1 - \frac{1}{\RO} - \frac{c}{1 + q} \right],
\end{pmatrix}
\]
which is a block-triangular matrix, so its associated characteristic polynomial is
\[
    P(\lambda) = (1 - \RO + \lambda)\left(1 + u + \lambda\right)(1 + \lambda)\left(1 + q + \lambda\right)\left(\kappa \left[1 - \frac{1}{\RO} - \frac{c}{1 + q} \right] - \lambda\right).
\]
It yields two conditions for local asymptotic stability, namely $\RO > 1$ and
\[
1 - \frac{1}{\RO} < \frac{c}{1 + q}.
\]
\subsubsection{EFC equilibrium}
For the EFC equilibrium instead the Jacobian matrix reads:
\[
\begin{pmatrix}
\frac{\RO}{\R_1} - \R_1  & 0 & -\frac{\RO}{\R_1} & -\epsilon\frac{\RO}{\R_1} & 0 \\
\frac{\RO}{\R_1} - \R_1 + 1 & -(1+u)  & -\frac{\RO}{\R_1} & -\epsilon\frac{\RO}{\R_1} & 0 \\
0 & u & -(1+q) & 0 & \frac{u}{1+u}\left(1-\frac{1}{\R_1}\right) \\
0 & 0 & q & -1 & 0 \\
0 & 0 & 0 & 0 & -\kappa \left[1-\frac{1}{\R_1} - \frac{c}{1 + q} \right].
\end{pmatrix}
\]
and developing with respect to the last row provides the stability condition
\[
\left(1-\frac{1}{\R_1}\right) > \frac{c}{1+q}.
\]
The rest of the spectrum is given by computing the determinant
\[
\begin{vmatrix}
\frac{\RO}{\R_1} -\R_1 -\lb & 0 & -\frac{\RO}{\R_1} & -\epsilon\frac{\RO}{\R_1} \\
\frac{\RO}{\R_1} - \R_1 + 1 & -(1+u+\lb)  & -\frac{\RO}{\R_1} & -\epsilon\frac{\RO}{\R_1} \\
0 & u & -(1+q+\lb) & 0 \\
0 & 0 & q & -1-\lb
\end{vmatrix}
\]
which can be processed for instance by applying the following operations
\begin{enumerate}
    \item L1 $\longleftarrow$ L1 - L2 - L3 - L4
    \item Factorization of $-(1 + \lb)$ in the first row
    \item $\forall j\{2, 3, 4\}$, C$j$ $\longleftarrow$ C$j$ + C1
    \item Developing along the first row
    \item L1 $\longleftarrow$ L1 - $\frac{1 - \R_1}{q}$L3
\end{enumerate}
to obtain
\begin{align*}
    &-(1+\lb)\times
\begin{vmatrix}
\R_1 - \frac{\RO}{\R_1} - u -\lb & 0 & (1 - \epsilon)\frac{\RO}{\R_1} + (1 - \R_1) + \frac{1 - \R_1}{q}(1 + \lb) \\
u & -(1+q+\lb) & 0 \\
0 & q & - (1 + \lb)
\end{vmatrix} \\
&= (1+\lb) \times \\
&\left[\left(\R_1 - \frac{\RO}{\R_1} + u + \lb\right)(1 + q + \lb)(1 + \lb) - uq\left((1 - \R_1) + (1 - \epsilon)\frac{\RO}{\R_1} + \frac{1 - \R_1}{q}(1 + \lb)\right) \right]. 
\end{align*}
using the rule of Sarrus.

We aim at applying the Routh-Hurwitz criterion to prove that all roots have a negative real part. Developing the polynomial in the brackets, we obtain $P(\lb) = \lb^3 + \alpha_1\lb^2 + \alpha_2\lb + \alpha_3$ with
\begin{itemize}
    \item[--] $\alpha_1 = \R_1 - \frac{\RO}{\R_1} + u + q$,
    \item[--] $\alpha_2 = \R_1 (u + q) - \frac{\RO}{\R_1}(q + 1) + uq = \R_1(u + q) - \frac{\RO}{\R_1}(q+1) + uq$,
    \item[--] $\alpha_3 = \left(\R_1 - \frac{\RO}{\R_1} + u - 1\right)q + (u-1)(q-1)\left[(\R_1 - 1)\frac{q}{q-1} - (1 - \epsilon)\frac{\RO}{\R_1} \right]$.
\end{itemize}
We need to check that $\alpha_1 > 0$, $\alpha_3 > 0$ and $\alpha_1 \alpha_2 > \alpha_3$. The first positivity is readily obtained by observing that
\[
\frac{\RO}{\R_1} = \frac{(u+1) (q+1)}{(u+1) (q+1) - u(1+\epsilon q)} \leq u+1
\]
combined with $\R_1 > 1$. The second is obtained by computing
\begin{align*}
    \alpha_3 &= \R_1 (u+1)(q+1) - \frac{\RO}{\R_1}(q+1 + (1 - \epsilon)u q)\\
    & = (u+1)(q+1)\left(\R_1 - \frac{q+1 + (1 - \epsilon)uq}{(u+1)(q+1) - u(1 + \epsilon q)}\right)\\
    & = (u+1)(q+1) (\R_1 - 1)
\end{align*}
so positivity is obtained by the condition $\R_1 > 1$. Finally, we show the last inequality be rewriting
\[
\alpha_1 = \left(u+1 - \frac{\RO}{\R_1}\right) + \R_1 + q +1 \qquad \text{and} \qquad \alpha_2 = (q+1)\R_1 + (q+2)\left(u + 1 - \frac{\RO}{\R_1}\right) + (u+1)(\R_1 - 1)
\]
which are both sums of three non-negative terms, with the product of the last of each being exactly $\alpha_3$.
\subsubsection{EPC equilibrium}
To assess local stability, we have to study the negativity of the roots of:{\footnotesize
\[
\begin{vmatrix}
\mathcal{R}_{x^*} - \frac{\RO}{\mathcal{R}_{x^*}} - \lb  & 0 & -\frac{\RO}{\mathcal{R}_{x^*}} & -\epsilon \frac{\RO}{\mathcal{R}_{x^*}} & 0 \\
1 + \mathcal{R}_{x^*} - \frac{\RO}{\mathcal{R}_{x^*}} & -(1 + u + \lb)  & -\frac{\RO}{\mathcal{R}_{x^*}} & -\epsilon \frac{\RO}{\mathcal{R}_{x^*}} & 0 \\
0 & u x^* & -(1 + q + \lb) & 0 & \frac{u}{1+u}\left(1 - \frac{1}{\mathcal{R}_{x^*}}\right) \\
0 & 0 & q & - (1 + \lb) & 0 \\
\kappa X\left[p + (1 - p)(1 - \mathcal{R}_{x^*} + \frac{\RO}{\mathcal{R}_{x^*}})\right] & 0 & -\kappa X \frac{1 - p}{\mathcal{R}_{x^*}} & -\epsilon \kappa X \frac{1 - p}{\mathcal{R}_{x^*}} & -\lb
\end{vmatrix}
\]
}
with $X = x^*(1-x^*) > 0$. We first perform
\begin{enumerate}
    \item C4 $\longleftarrow$ C4 - $\epsilon$ C3
    \item L2 $\longleftarrow$ L2 - L1
\end{enumerate}
and then develop with respect to the last column to obtain the associated characteristic polynomial (up to the sign)
\[
P(\lambda) = \lambda D_4(\lambda) + \kappa \frac{u}{1+u}\left(1 - \frac{1}{\mathcal{R}_{x^*}}\right) \frac{\RO}{\mathcal{R}_{x^*}} X D_3(\lambda)
\]
with $D_4(\lambda)$ and $D_3(\lambda)$ being the resulting $4 \times 4$ determinants, leading to polynomials in $\lambda$ of degree $4$ and $3$, respectively. Then we compute
\begin{align*}
    D_4(\lambda) &= \begin{vmatrix}
\mathcal{R}_{x^*} - \frac{\RO}{\mathcal{R}_{x^*}} - \lb  & 0 & -\frac{\RO}{\mathcal{R}_{x^*}} & 0 \\
1 + \lb & -(1 + u + \lb)  & 0 & 0 \\
0 & u x^* & -(1 + q + \lb) & \epsilon (1 + q + \lb) \\
0 & 0 & q & - (1 + \epsilon q + \lb)
\end{vmatrix}\\
    &= \begin{vmatrix}
\mathcal{R}_{x^*} - \frac{\RO}{\mathcal{R}_{x^*}} - \lb  & 0 & -\frac{\RO}{\mathcal{R}_{x^*}} & 0 \\
1 + \lb & -(1 + u + \lb)  & 0 & 0 \\
0 & u x^* & -(1 + \lb) & -(1 - \epsilon) (1 + \lb) \\
0 & 0 & q & - (1 + \epsilon q + \lb)
\end{vmatrix}\\
    &=(1+\lb)\bigl[\lb^3 + (\frac{\RO}{\mathcal{R}_{x^*}} - \mathcal{R}_{x^*}+u+q+2)\lb^2\\
    & \qquad \qquad + \left(\frac{\RO}{\mathcal{R}_{x^*}} u x^* + \left(\frac{\RO}{\mathcal{R}_{x^*}} - \mathcal{R}_{x^*}\right)(u+q+2) + (1+u)(1+q)\right)\lb\\
    & \qquad \qquad \qquad \qquad+ \frac{\RO}{\mathcal{R}_{x^*}} u x^*(1+\epsilon q) + \left(\frac{\RO}{\mathcal{R}_{x^*}} - \mathcal{R}_{x^*}\right)(1+u)(1+q)\bigr]
\end{align*}
and straightforwardly apply the Routh-Hurwitz criterion to the polynomial in the brackets. As a result, all roots of $D_4$ have negative real parts. Some additional computations lead to
\[
    D_3(\lb) = (1 - p)\frac{\RO}{\mathcal{R}_{x^*}}(1 + \epsilon q + \lb)\left(\frac{\RO}{\mathcal{R}_{x^*}} - \mathcal{R}_{x^*} +\lb\right)(1 + u + \lb)
\]
which has three negative roots. For any $\kappa >0$, all coefficients appearing in $P$ are positive, so for sufficiently small positive perturbation of $\lambda D_4(\lambda)$ by the other term of $P$, the two last conditions to check in the Routh-Hurwith criterion still holds. As a result, all roots of $P$ have negative real parts at least for sufficiently small $\kappa >0$. The code provided as a companion for this proof explicitly computes the critical bound above which the local asymptotic stability is not guaranteed. The code also displays two plots of the temporal dynamics of the $x$ variable: one for a $\kappa$ lying $5\%$ below the critical value and the other $5\%$ above. The former converges to a single value, whereas the latter exhibits sustained oscillations.

\section{Impact of imitation on the dynamics}
\label{section_imitation}

\subsection{Equilibrium prevalence}

When imitation is completely turned off our model collapses to the IDF model and the disease prevalence equation in the stationary limit reads

\[
    \I^*(p_Q) := 1 - \frac{1}{\mathcal R_{p_Q}},
\]
with
\[
    \mathcal R_{p_Q}:= \left(1-\frac{u}{1+u}\frac{1 + \epsilon q}{1 + q}p_Q\right)\RO
\]

which is analogous to Equation~\ref{eq:Rx}, if we have a fixed compliance $p_Q$ instead of the equilibrium compliance $x^*$ that depends on the model parameters.

This suggests that, whether a coupled disease and opinion dynamics would yield a higher or lower equilibrium prevalence compared to the same disease model without opinion dynamics depends uniquely on the comparison between $p_Q$ (the population baseline compliance) and $x^*$. In general, an imitation dynamics or self-regulation provides better epidemiological results as long as we are in the regions of parameter space where $x^* > p_Q$ is satisfied. When a population has very low adoption levels, it is better for people to get exposed to information such as prevalence and incidence, in order to have a chance to increase the overall compliance and improve mitigation. Instead, when a population has a very large baseline level of adoption, it is advantageous for people not to rethink their behaviour, in order to maintain high levels of compliance.

\subsection{Transient prevalence}

Although there is a discontinuity in the equilibrium prevalence between the limit case $\kappa = 0$ treated in the IDF model and the case $\kappa > 0$ studied in the present work, we see that this discontinuity arises in the following way from a common initial condition.

When $\kappa = 0$, the dynamics initially follows the trajectory predicted by the IDF model with $p_Q=x(t=0)$ given by the initial condition. However, as soon as $\kappa > 0$, the long-time regime is characterised by a convergence to the trajectory predicted by IDF with $p_Q = x^*$. Increasing $\kappa$ further, anticipates the moment in which the initial condition is forgotten by the system yielding its collapse onto its long-time trajectory. We thus find that despite the fact that the volatility $\kappa$ of opinions does not contribute to the equilibria, it affects the transient phase.

Moreover, if the initial condition $x_0 < x^*$ (or the reverse $x_0 > x^*$), the IDF model predicts $\I(p_Q = x_0) > \I(p_Q = x^*)$ (or the reverse $\I(p_Q = x_0) < \I(p_Q = x^*)$). This has a straighforward implication on the effect of $\kappa$ on the transient phase: if $x_0 < x^*$ (or the reverse $x_0 > x^*$) a lower cumulative number of infected individuals is achieved when the volatility $\kappa$ is high (low). 

See Figure~\ref{fig:effect_kappa} for a representative example of the effect of $\kappa$ on the transient phase of the dynamics.

\subsection{Optimal isolation duration with and without imitation dynamics}

In the model with imitation dynamics, the compliance takes a negative contribution given by the average duration of the isolation period (the payoff  $\pi_Q$ of isolating). In Figure~\ref{fig:no_imitation_dependence}, we compare the $\kappa >0$ case (main text) with the $\kappa = 0$ limit but such that the level of adoption is a decaying function of the isolation duration (similarly to~\cite{deMeijere2021}). This coupling between the level of adoption and the isolation duration allows the $q \rightarrow 0$ limit of very long quarantines to be associated with a larger prevalence than in the case without coupling. This yields an optimal isolation duration in all the parameter space when $\kappa=0$, which is not observed when $\kappa > 0$.

\begin{figure}
    \centering
    \includegraphics[width = 0.7 \linewidth]{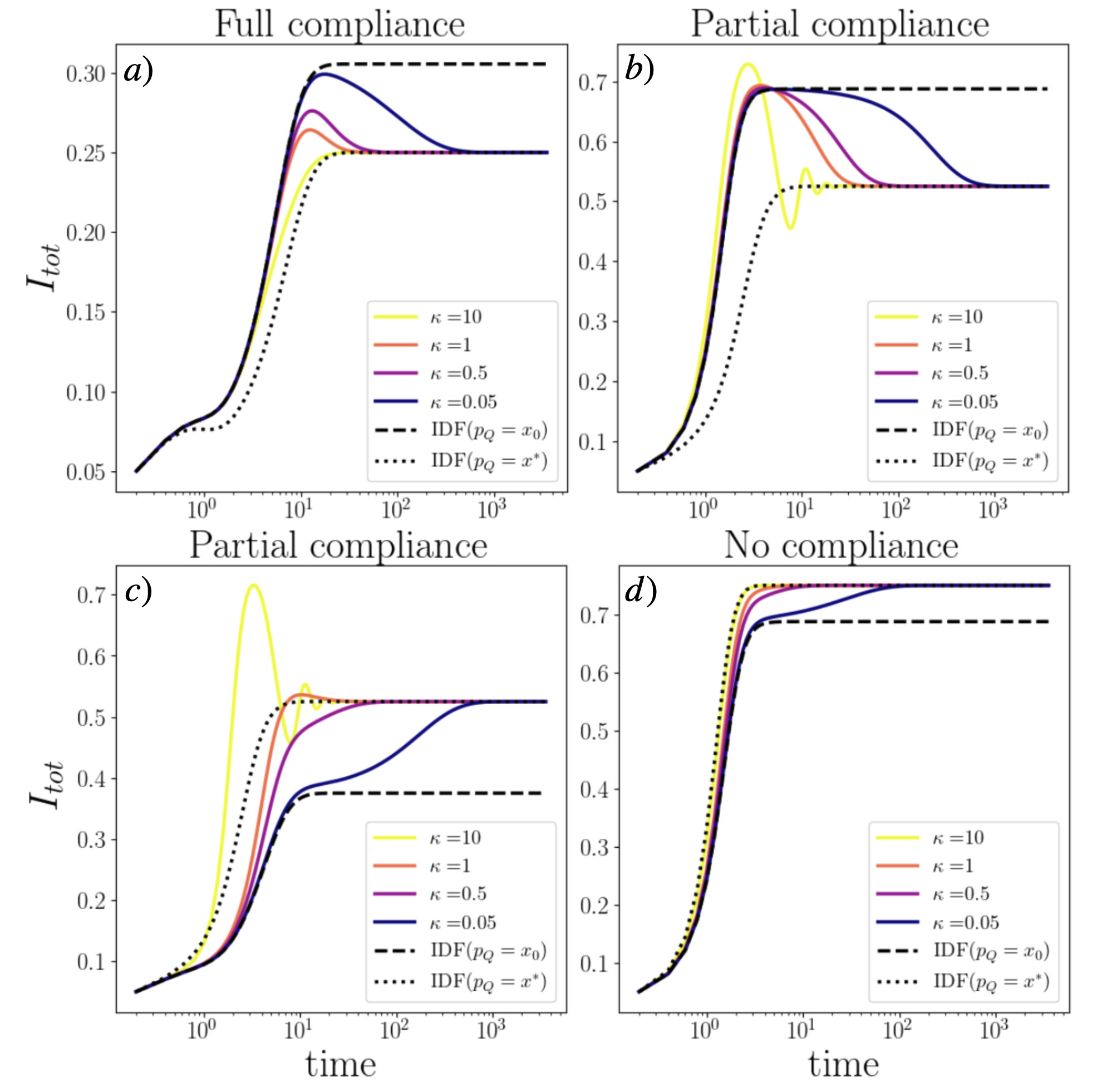}
    \caption{\textbf{Effect of $\kappa$ on the transient phase, in the EFC, EPC and ENC regimes}. Evolution of the prevalence in time with $x_0$, the initial condition on the fraction of cooperators. We distinguish between the scenarios with $x_0 < x^*$ (`Full compliance' and `Partial compliance') and those with $x_0 > x^*$ (`Partial compliance' and `No compliance'). Parameter values: $\mathcal{R}_0 = 4$, $\epsilon = 0.2$, $\mu=1$, $\mu_U = 5$, $\mu_Q = 1/3$. In `No Compliance': $x_0 = 0.3$ and $c=2$ ; in `Partial Compliance': $x_0 = 0.3$ and $c = 0.7$ (center, left) or $x_0 = 0.9$ and $c = 0.7$ (center, right) ; in `Full Compliance': $x_0 = 0.96$ and $c = 0.1$.}
    \label{fig:effect_kappa}
\end{figure}

\begin{figure}
    \centering
    \includegraphics[width=\textwidth]{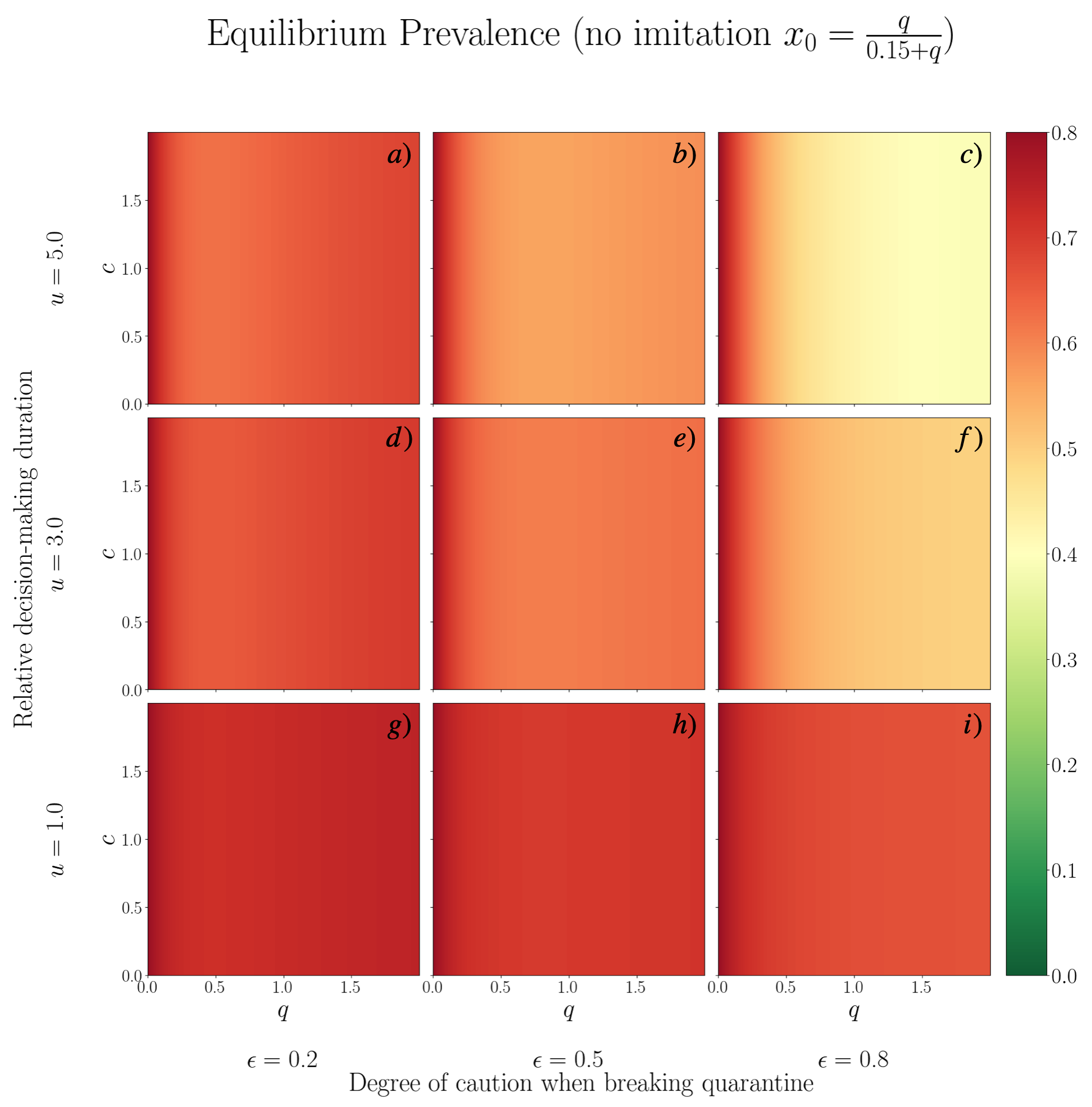}
    \caption{\textbf{Equilbrium prevalence in the absence of imitation dynamics but with compliance that decays with the duration of isolation}. Heatmap of the equilibrium prevalence depending on the cost $c$ of quarantine, the lack of coverage $q$ of the infectious period by quarantine, the rapidity of entrance in isolation $u$ and the degree of caution $\epsilon$ when breaking quarantine. We have assumed that the compliance $p_Q$ decays as $q/(0.15 + q)$, similarly to the minimal assumption made in \cite{deMeijere2021}. Parameters: $\mathcal{R}_0 = 5$ and $x_0 = 0.8$.}
    \label{fig:no_imitation_dependence}
\end{figure}

\section{Code availability}

The Python codes used throughout this paper are available at the following GitHub repository: \url{https://github.com/Hugo-Martin/SIS-quarantine}.

\end{document}